\DeclareMathOperator{\tr}{Tr}
\DeclareMathOperator{\id}{id}
\theoremstyle{definition}
\newtheorem{theorem}{Theorem}
\newtheorem{lemma}[theorem]{Lemma}
\newtheorem{proposition}[theorem]{Proposition}
\theoremstyle{definition}
\theoremstyle{remark}
\begin{document}

\title{Spread quantum information in one-shot quantum state merging}

\author{Hayata Yamasaki}
\email{yamasaki@eve.phys.s.u-tokyo.ac.jp}
\affiliation{Department of Physics, Graduate School of Science, The University of Tokyo, 7--3--1 Hongo, Bunkyo-ku, Tokyo 113--0033, Japan}

\author{Mio Murao}
\email{murao@phys.s.u-tokyo.ac.jp}
\affiliation{Department of Physics, Graduate School of Science, The University of Tokyo, 7--3--1 Hongo, Bunkyo-ku, Tokyo 113--0033, Japan}

\date{\today}

\begin{abstract}
  We prove the difference between the minimal entanglement costs in quantum state merging under one-way and two-way communication in a one-shot scenario, whereas they have been known to coincide asymptotically. While the minimal entanglement cost in state merging under one-way communication is conventionally interpreted to characterize partial quantum information conditioned by quantum side information, we introduce a notion of \textit{spread quantum information} evaluated by the corresponding cost under two-way communication. Spread quantum information quantitatively characterizes how nonlocally one-shot quantum information is spread, and it cannot be interpreted as partial quantum information.
\end{abstract}

\maketitle

\textit{Introduction.}---Analyses of quantum communication tasks quantitatively characterize the nature of information encoded in quantum systems.
Conventionally, tasks in \textit{asymptotic scenarios} considering an independent and identically distributed (IID) source have been used for such characterizations; \textit{e.g.}, Schumacher compression~\cite{S10} quantifies information per quantum state from an IID source in terms of the minimum amount of quantum communication for transferring these states, which is evaluated by the quantum entropy.
By contrast, recently developing one-shot quantum information theory~\cite{T11} aims to analyze quantum communication tasks for a single copy of state, and such \textit{one-shot scenarios} are more relevant to practical situations on small and intermediate scales~\cite{P4} of up to several dozens of qubits.
Consider two parties, namely a sender $A$ and a receiver $B$, aiming to transfer $A$'s one-qubit state $\Ket{\psi_{\boldsymbol\alpha}}\coloneqq\alpha_0\Ket{0}^A+\alpha_1\Ket{1}^A$ from $A$ to $B$, where $\left\{\Ket{0},\Ket{1}\right\}$ is a fixed orthogonal basis, and $\alpha_0,\alpha_1\in\mathbb{C}$ are arbitrary and unspecified complex coefficients satisfying $\sum_l{\left|\alpha_l\right|}^2=1$.
In this one-shot scenario, the required amount of quantum communication is one qubit, which can be interpreted to quantify \textit{one-shot quantum information} of $\Ket{\psi_{\boldsymbol\alpha}}$ similarly to the asymptotic scenario.
In this sense, one-shot quantum information of $A$'s arbitrary multi-qubit state in the form of $\alpha_0\Ket{\psi_0}^A+\alpha_1\Ket{\psi_1}^A$ is also one qubit, where $\left\{\Ket{\psi_0},\Ket{\psi_1}\right\}$ is a known set of fixed multi-qubit mutually orthogonal states corresponding to above $\left\{\Ket{0},\Ket{1}\right\}$, and quantum information is represented as a superposition of these multi-qubit states.

When quantum information is encoded in a composite system, such as above $\alpha_0\Ket{\psi_0}+\alpha_1\Ket{\psi_1}$,
subsystems can be distributed between the two distant parties $A$ and $B$.
We ask how to characterize nonlocal properties of one-shot quantum information \textit{spread nonlocally} between $A$ and $B$, \textit{i.e.}, represented as an arbitrary superposition of $D$ shared mutually orthogonal states $\sum_{l=0}^{D-1}\alpha_l\Ket{\psi_l}^{AB}$.
A task of transferring $A$'s part of $\sum_{l=0}^{D-1}\alpha_l\Ket{\psi_l}^{AB}$ to $B$ without destroying coherence is equivalent to quantum state merging~\cite{Y12} (simply called state merging in the following),
a communication task where $A$ and $B$ initially share a purified state $\Ket{\psi}^{RAB}\coloneqq\frac{1}{\sqrt{D}}\sum_{l=0}^{D-1}\Ket{l}^R\otimes\Ket{\psi_l}^{AB}$ in terms of a reference $R$ with its computational basis ${\left\{\Ket{l}^R\right\}}_{l=0,\ldots,D-1}$, and a protocol transfers $A$'s part of $\Ket{\psi}^{RAB}$ to $B$, keeping coherence between $R$ and $B$~\cite{H3,H4}.
The final state in state merging can be written as $\Ket{\psi}^{RB^\prime B}$, where $B^\prime$ is $B$'s system corresponding to $A$.
These situations of state merging ubiquitously appear in distributed quantum information processing~\cite{W8,W9,W10,Y13,W12}, multipartite entanglement transformations~\cite{A3,D8,D9,H8,A8,Y8,S4},
and analyses of a family of other communication tasks in quantum Shannon theory~\cite{W5,A2,O1,D1,H5,D4,A7,H11,P3,L6,L7}.
State merging is also used for attempts to understand physical phenomena in quantum thermodynamics~\cite{F1} and quantum gravity~\cite{A1}.

Originally in Refs.~\cite{H3,H4}, state merging is formulated in the asymptotic scenario, where $n$ copies of $\Ket{\psi}^{RAB}$ are given and $A$'s part is transferred to $B$ within a vanishing error in fidelity as $n\to\infty$.
In this asymptotic scenario, state merging has been interpreted to characterize \textit{partial quantum information}~\cite{H3} in analogy to the classical Slepian-Wolf problem~\cite{S7}, where $A$ and $B$ are given classical messages $X$ and $Y$ respectively, and use classical communication from $A$ to $B$ to let $B$ know $X$. If $Y$ is correlated with $X$, partial knowledge on $X$ is provided to $B$ by $Y$, which is called $B$'s \textit{side information}, and the conditional entropy $H\left(X|Y\right)$ characterizes how much \textit{partial information} on $X$ conditioned by $Y$ should be additionally transferred from $A$ to $B$.
In state merging of $\Ket{\psi}^{RAB}$,
quantum communication is achieved by means of quantum teleportation~\cite{B5}, using local operations and classical communication (LOCC)~\cite{C7} assisted by entanglement resources shared between $A$ and $B$.
The amount of entanglement required for state merging is called \textit{entanglement cost},
which is the cost to be minimized.

The minimal entanglement cost in state merging of $n$ copies of $\Ket{\psi}^{RAB}$ asymptotically converges to the conditional quantum entropy ${H\left(A|B\right)}_{\psi}$ per copy~\cite{H3,H4}.
Even if $A$ and $B$ are allowed to perform \textit{two-way} LOCC using classical communication both from $A$ to $B$ and from $B$ to $A$,
this minimal cost can be achieved by only \textit{one-way} LOCC using one-way classical communication from $A$ to $B$; that is, $B$'s preprocessing and classical communication from $B$ to $A$ do not contribute to reducing this cost.
Analogously to the classical Slepian-Wolf problem, $B$'s part of $\Ket{\psi}^{RAB}$ is called \textit{quantum side information} at $B$.
Similar notions of quantum side information are widely used in other contexts, such as entropic uncertainty relations~\cite{C12}, state exchange~\cite{L7,O1}, and classical-quantum Slepian-Wolf problems~\cite{D10,T5,R3,T9,L1,M5,C11}.
Entanglement cost of an optimal one-way LOCC protocol for state merging has been interpreted to characterize how much \textit{partial quantum information} conditioned by $B$'s quantum side information should be additionally transferred from $A$ to $B$.

In this letter, we show that the minimal entanglement cost in state merging under one-way LOCC and that under two-way LOCC can be \textit{different} in a one-shot scenario, whereas they coincide in the asymptotic scenario~\cite{H3,H4}.
State merging and its generalized task, state redistribution~\cite{D2,D3}, have also been defined and analyzed in various one-shot scenarios~\cite{Y12,Y9,B9,B12,D7,D6,H10,B10,D5,M,N3,A4,A5,B15,B13,A16,A17}, as well as other derivatives~\cite{B1,B2,S11,S12,S14,A10,A11}.
In the one-shot scenarios, state merging can be achieved using techniques of one-shot decoupling~\cite{D6}, the convex-split lemma~\cite{A4}, or the Koashi-Imoto decomposition~\cite{Y12}.
However, these existing techniques employ \textit{only one-way} communication similarly to the asymptotic scenario.
Although two-way LOCC protocols may outperform one-way LOCC protocols in general,
\textit{no example} of separation between one-way LOCC and two-way LOCC in entanglement cost of one-shot state merging has been known~\cite{B10}, due to hardness of evaluating the minimal entanglement cost in one-shot scenarios.
In contrast, we prove an advantage of two-way LOCC over one-way LOCC in reducing entanglement cost in a one-shot scenario of state merging, by showing an instance.
As a result, the following two different (but asymptotically coinciding) notions of \textit{quantum information} characterized by one-shot state merging are suggested, as will be discussed after showing our results:
\begin{itemize}
  \item\textit{Partial quantum information conditioned by $B$'s quantum side information:} The minimal entanglement cost in state merging under \textit{one-way} LOCC, in analogy to that in the classical Slepian-Wolf problem as discussed in Ref.~\cite{H3};
  \item\textit{Spread quantum information for $B$:} The minimal entanglement cost in state merging under \textit{two-way} LOCC, characterizing how much shared entanglement is needed to concentrate on $B$ quantum information initially spread between $A$ and $B$.
\end{itemize}

\begin{figure}[t!]
  \centering
  \includegraphics[width=8.6cm]{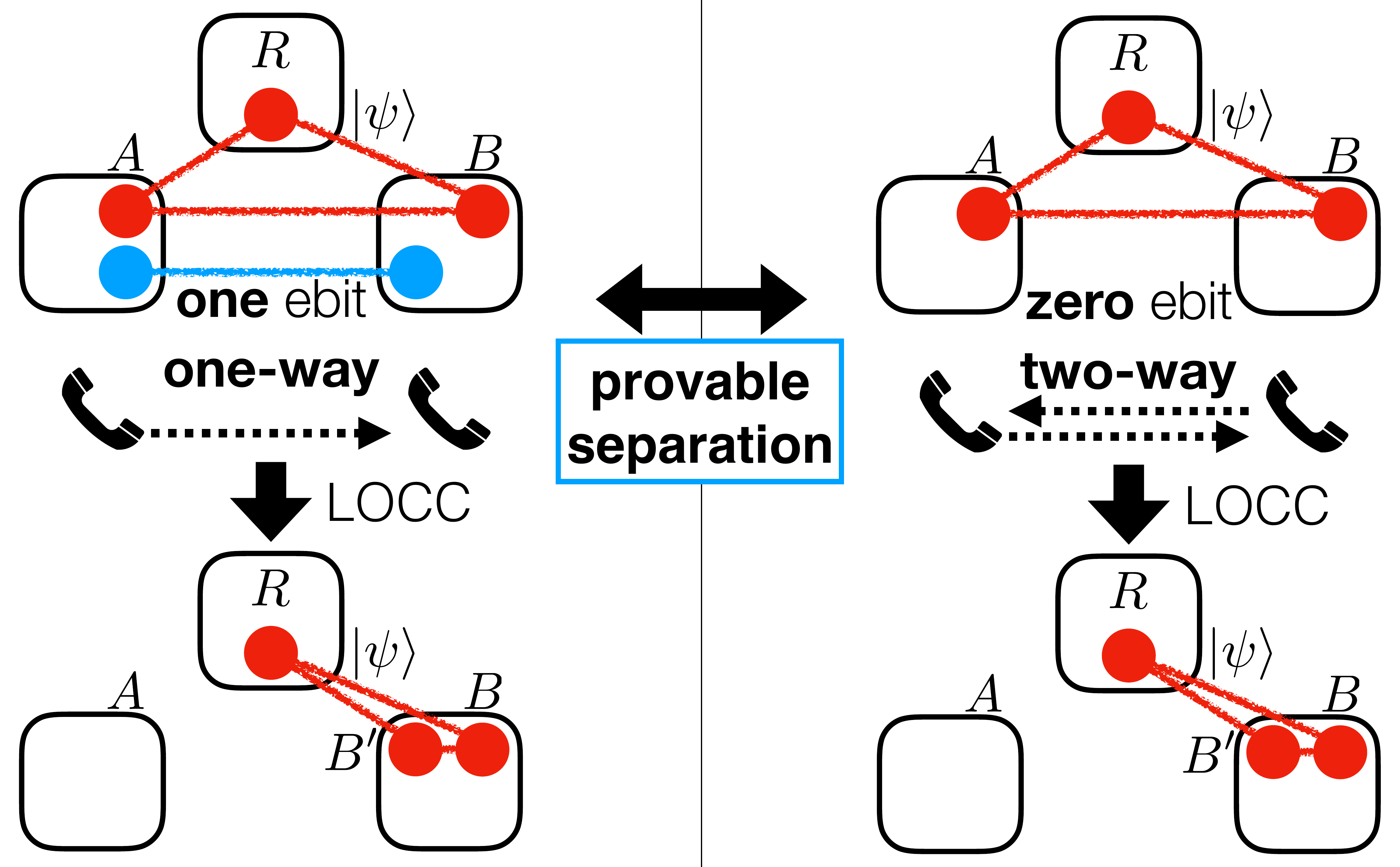}
  \caption{\label{fig:result}One-way LOCC and two-way LOCC protocols for a one-shot state merging of $\Ket{\psi}^{RAB}$ defined as Eq.~\eqref{eq:phi} represented by the red circles, where classical communication is represented by the dotted arrows. While the optimal one-way LOCC protocol for this task requires nonzero entanglement cost (one ebit, represented by the connected blue circles), there exists a two-way LOCC protocol achieving zero entanglement cost, leading a provable separation.}
\end{figure}

\begin{table*}[t!]
  \centering
  \caption{\label{table:compare}Whether or not there exists the separation between one-way LOCC and two-way LOCC in terms of achievability of several quantum tasks in asymptotic and one-shot scenarios.}
  \begin{ruledtabular}
  \begin{tabular}{@{}llllll@{}}
    task & \begin{tabular}{@{}l@{}} state transformation\\(bipartite pure) \end{tabular} & state splitting & \begin{tabular}{@{}l@{}}state merging\\ (non-catalytic) \end{tabular} & \begin{tabular}{@{}l@{}} entanglement\\ distillation \end{tabular} & \begin{tabular}{@{}l@{}}local state\\ discrimination \end{tabular} \\ \hline
    asymptotic scenario & No~\cite{B3}. & No~\cite{A2}. & No~\cite{H3,H4}. & Yes~\cite{B14}. & Yes~\cite{O3}. \\
    one-shot scenario & No~\cite{N2}. & No~\cite{Y12}. & Yes (Theorem~\ref{thm:result}). & Yes~\cite{C2}. & Yes~\cite{G4,C4,O2,C5,N4,T7,T8,C3}. \\
  \end{tabular}
\end{ruledtabular}
\end{table*}

\textit{Separation between one-way LOCC and two-way LOCC in a one-shot state merging.}---The task of one-shot quantum state merging analyzed in this letter is defined as follows.
Given a state $\Ket{\psi}^{RAB}$ and an error $\epsilon\geqq 0$, define a one-shot scenario of state merging of $\Ket{\psi}^{RAB}$ within $\epsilon$ as a task of $A$ and $B$ performing LOCC assisted by a maximally entangled state $\Ket{\Phi_K}^{AB}\coloneqq\frac{1}{\sqrt{K}}\sum_{l=0}^{K-1}\Ket{l}^A\otimes\Ket{l}^B$ with Schmidt rank $K$ to transform $\Ket{\psi}^{RAB}\otimes\Ket{\Phi_K}^{AB}$ into a final state ${\tilde\psi}^{RB^\prime B}$ satisfying the fidelity condition $F^2\left({\tilde\psi}^{RB^\prime B},\Ket{\psi}\Bra{\psi}^{RB^\prime B}\right)\coloneqq\Bra{\psi}{\tilde\psi}\Ket{\psi}\geqq 1-\epsilon^2$, where entanglement cost $\log_2 K$ is to be minimized.
We call this task \textit{non-catalytic approximate state merging}, or if obvious, state merging.
This task is a smoothed version of \textit{non-catalytic exact state merging} defined in Ref.~\cite{Y12},
while our results also apply to non-catalytic exact state merging. Such non-catalytic use of shared entanglement resources is relevant to one-shot scenarios implemented by small- and intermediate-scale quantum computers~\cite{P4} of up to several dozens of low-noise qubits where a large amount of entanglement catalyst cannot be stored faithfully.
Our main result (Fig.~\ref{fig:result}) is shown as follows.

\begin{theorem}
\label{thm:result}
  \textit{Separation of one-way and two-way LOCC in a one-shot state merging.}
  There exists a state $\Ket{\psi}^{RAB}$ (defined later in Eq.~\eqref{eq:phi}) and a nonzero error threshold $\epsilon_0>0$ such that for any $\epsilon\in\left[0,\epsilon_0\right]$, the following hold.
  \begin{enumerate}
  \item \textit{No one-way LOCC protocol} for non-catalytic approximate state merging of $\Ket{\psi}^{RAB}$ within $\epsilon$ achieves zero entanglement cost, while \textit{one} ebit of entanglement cost, \textit{i.e.}, $\log_2 K = 1$, is achievable.
  \item There exists \textit{a two-way LOCC protocol} for non-catalytic approximate state merging of $\Ket{\psi}^{RAB}$ within $\epsilon$ achieving \textit{zero} entanglement cost, \textit{i.e.}, $\log_2 K = 0 < 1$.
\end{enumerate}
\end{theorem}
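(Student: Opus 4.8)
The plan is to prove the two items separately: the two-way achievability (item 2) by exhibiting a concrete protocol, and the one-way statement (item 1), whose lower-bound half is the crux. For item 2, I would construct an explicit two-way LOCC protocol realizing \emph{exact} merging (so that it holds for every $\epsilon\in[0,\epsilon_0]$ a fortiori, consistent with the remark that the results also cover non-catalytic exact state merging). The natural structure exploits the back-and-forth: first $B$ applies a local operation on its share of $\Ket{\psi}^{RAB}$ adapted to the correlations it holds with $R$ and $A$, sending a classical outcome to $A$; conditioned on this, $A$ applies a local unitary and measurement and replies to $B$; finally $B$ applies an outcome-dependent local correction on $B$ together with a fresh local ancilla $B^\prime$. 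The design principle is that $B$ acting \emph{first} reshapes the shared state so that $A$'s residual register carries no quantum correlation with $R$ and the entanglement already present inside $\Ket{\psi}^{RAB}$ between $A$ and $B$ can be recycled as the teleportation resource; I would then verify by directly tracking the global state that the output on $RB^\prime B$ equals $\Ket{\psi}^{RB^\prime B}$ exactly, with $A$'s residue left in a fixed product state and with $\log_2 K=0$.

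For item 1 I would first establish the impossibility of zero external entanglement ($K=1$) in the exact case and then smooth. A zero-entanglement one-way protocol is rigid: $A$ applies a quantum instrument $\left\{\mathcal{E}_i\right\}$, transmits the classical label $i$ to $B$, and $B$ applies a conditional isometry $V_i$ appending $B^\prime$. Writing out the resulting purified global state, exact success forces every branch to reproduce $\Ket{\psi}^{RB^\prime B}$ while decoupling both $A$'s residual register and the classical register from $RB^\prime B$. I would derive a contradiction from an invariant that is insensitive to a classical message from $A$ to $B$ but detects the obstruction in $\Ket{\psi}^{RAB}$: since $R$ is never touched and only classical data flows $A\to B$, the genuinely quantum part of the $R$--$A$ correlation cannot be coherently relocated to $B$'s side at zero external cost. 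Concretely I would use the Koashi--Imoto decomposition of the $A$-marginal (Ref.~\cite{Y12}) together with a coherent-information/decoupling balance across the $A\,|\,RB$ cut to show that exact one-way merging of the chosen state requires $K\geq 2$.

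To pass to $\epsilon>0$ I would invoke continuity and compactness: restricting to fixed finite dimensions (ancillas and outcome counts bounded by the target dimension via a Carath\'eodory-type argument) makes the set of zero-entanglement one-way outputs a compact set, the fidelity $\Bra{\psi}\tilde\psi\Ket{\psi}$ is continuous, and by the exact impossibility the target $\Ket{\psi}\Bra{\psi}^{RB^\prime B}$ lies at strictly positive distance from that set; hence there is a uniform threshold $\epsilon_0>0$ below which no zero-entanglement one-way protocol reaches $1-\epsilon^2$. For the matching one-way \emph{upper} bound I would give an explicit one-way protocol that, after the Koashi--Imoto reduction has isolated $A$'s effective single qubit, teleports it to $B$ using the supplied $\Ket{\Phi_2}^{AB}$, achieving $\log_2 K=1$ within (in fact without) error.

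The hard part will be the robust one-way lower bound: one must rule out \emph{all} zero-entanglement one-way LOCC protocols, including those with arbitrary local pre- and post-processing and many measurement outcomes, and then turn the exact impossibility into a quantitative, $\epsilon$-uniform gap. The delicate technical core is identifying the right monotone/invariant---one that is provably non-increasing under classical communication from $A$ to $B$ yet strictly witnesses the $R$--$A$ quantum correlation of the specific $\Ket{\psi}^{RAB}$ in Eq.~\eqref{eq:phi}---and controlling it under the small perturbations allowed by the fidelity slack.
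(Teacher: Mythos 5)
Your smoothing step (compactness of the set of zero-entanglement one-way LOCC protocols in fixed dimensions plus continuity of fidelity, giving a uniform $\epsilon_0>0$ from the exact no-go) matches the paper, and your one-ebit upper bound via the Koashi--Imoto decomposition is also the paper's route (though the paper must handle four KI blocks---one two-dimensional block merged by teleportation and three three-dimensional blocks merged for free---and pays the single ebit to combine the subprocesses coherently, rather than ``isolating a single qubit''). The genuine gap is in your plan for the exact one-way lower bound. You propose an invariant built from a ``coherent-information/decoupling balance across the $A\,|\,RB$ cut'' that is non-increasing under local operations and classical communication from $A$ to $B$. But any such entropic/decoupling quantity of the standard type is monotone under \emph{all} LOCC, and item~2 of the theorem itself shows that every all-LOCC monotone bound must be vacuous for this state: a two-way protocol merges it exactly at zero cost. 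This is precisely why the paper states that the existing converse techniques of Refs.~\cite{H4,B10,Y12} cannot work here and that a new technique is needed. Your heuristic that ``the genuinely quantum part of the $R$--$A$ correlation cannot be coherently relocated to $B$'s side at zero external cost'' under one-way communication is not an invariant you can control---it is a restatement of the claim, and it is false for two-way classical communication, so whatever witnesses it must be sensitive to the \emph{direction} of the classical messages. The paper's actual mechanism is a reduction to local state discrimination: exact merging of $\Ket{\psi}^{RAB}$ at zero cost would implement $\sum_l\alpha_l\Ket{\psi_l}^{AB}\to\sum_l\alpha_l\Ket{\psi_l}^{B^\prime B}$, after which $B$ could discriminate the three orthogonal states locally; since the set in Eq.~\eqref{eq:s} is a known $2$-LOCC set (Ref.~\cite{N4}), one-way local discrimination is impossible, hence so is one-way zero-cost exact merging. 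Nothing in your proposal supplies this reduction or any substitute for the direction-sensitive no-go it imports.

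Relatedly, your item~2 is a design principle rather than a proof: the existence of a zero-cost two-way protocol is exactly what must be exhibited, and it hinges on the same missing insight. The state in Eq.~\eqref{eq:phi} is not generic---it is engineered from a $2$-LOCC set admitting two-way discrimination \emph{without elimination} (elimination-based discrimination protocols destroy the coherence needed for merging, so the known protocols of Refs.~\cite{N4,T7,T8} cannot be reused). The paper's protocol is not teleportation-recycling as you suggest: $B$'s first measurement $\left\{M_j^B\right\}_{j=0,1,2}$ orthogonalizes the $\mathbb{C}^9$ components on $A$'s side, $A$'s conditioned thirty-three-outcome rank-one measurement $\left\{M_{k|j}^A\right\}$ collapses every branch to a maximally entangled state of Schmidt rank three between $R$ and $B$, and $B$'s conditional isometry then produces $\Ket{\psi}^{RB^\prime B}$ exactly. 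Without identifying both the specific state and the elimination-free discrimination structure, neither half of your argument can be completed.
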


Regarding provable separations between one-way LOCC and two-way LOCC in achievability of quantum tasks, only a few examples are known to date, such as entanglement distillation and local state discrimination, as summarized in Table~\ref{table:compare}.
While the set of one-way LOCC maps is strictly included in that of two-way LOCC maps~\cite{C7}, this difference does not necessarily affect achievability of some tasks; \textit{e.g.}, one-way LOCC suffices for deterministic transformations between two fixed bipartite pure states and state splitting, the inverse communication task of state merging.
Among the known separations, the separation in local state discrimination based on hypothesis testing is first proven in a one-shot scenario~\cite{O2}, and later in Ref.~\cite{O3}, it is shown that the separation \textit{does survive} in the corresponding asymptotic scenario.
In contrast to such known separations shown in both asymptotic and one-shot scenarios, Theorem~\ref{thm:result} on state merging provides a case where a provable separation in a one-shot scenario \textit{does not survive} in the asymptotic scenario.
Note that in the asymptotic scenario of state merging of $\Ket{\psi}^{RAB}$, even if entanglement catalyst is allowed in the definition itself, the amount of entanglement catalyst necessary in the asymptotically optimal protocols~\cite{H4,B20} can be \textit{arbitrarily close to zero} per copy of $\Ket{\psi}^{RAB}$.
It is unknown how common such examples exhibiting the separation between between one-way LOCC and two-way LOCC in one-shot state merging are~\footnote{Our construction of $\Ket{\psi}$ in Theorem~\ref{thm:result} uses three $\left(11\times 11\right)$-dimensional mutually orthogonal pure states $\Ket{\psi_0}$, $\Ket{\psi_1}$, and $\Ket{\psi_2}$ as shown in Eqs.~\eqref{eq:s} and~\eqref{eq:phi}, while we do not optimize these dimensions. As for the number of these states, \textit{i.e.}, three, as long as our proof technique is used for separation between one-way LOCC and two-way LOCC, two states are not sufficient since local state discrimination for any two states is achievable by one-way LOCC~[J.\ Walgate, A.\ J.\ Short, L.\ Hardy, and V.\ Vedral, Phys.\ Rev.\ Lett.\ \textbf{85}, 4972 (2000).].}.

\textit{Connection between state merging and local state discrimination.}---For proving Theorem~\ref{thm:result} on the separation between one-way LOCC and two-way LOCC in a one-shot state merging, we need a no-go theorem that is \textit{only applicable to one-way} LOCC and is \textit{provably false for two-way} LOCC\@.
The existing proof techniques used in Refs.~\cite{H4,B10,Y12} for obtaining lower bounds of entanglement cost in state merging are not sufficient, since these techniques are based on no-go theorems applicable to any LOCC maps.
Hence, another proof technique than these existing ones has to be established.

In our proof of Theorem~\ref{thm:result}, local state discrimination plays an essential role.
In local state discrimination, two parties $A$ and $B$ initially share an unknown state $\Ket{\psi_l}^{AB}$ given from a known set ${\left\{\Ket{\psi_l}^{AB}\right\}}_{l=0,\ldots,D-1}$ of $D$ mutually orthogonal pure states, and the task aims to determine the index $l$ of $\Ket{\psi_l}^{AB}$ with unit probability by an LOCC measurement.
There exists a set of such states for which local state discrimination is not achievable by one-way LOCC but is achievable by two-way LOCC, which is called a $2$-LOCC set.
References~\cite{N4,T7,T8} provide $2$-LOCC sets for any possible dimensional systems.

State merging and local state discrimination are related in the sense that achievability of state merging implies that of local state discrimination.
If there exists a protocol exactly achieving state merging of a state having the Schmidt decomposition $\Ket{\psi}^{RAB}\coloneqq\frac{1}{\sqrt{D}}\sum_{l=0}^{D-1}\Ket{l}^R\otimes\Ket{\psi_l}^{AB}$ at zero entanglement cost,
then this protocol transforms any superposition of the $D$ mutually orthogonal states ${\left\{\Ket{\psi_l}^{AB}\right\}}_{l}$ into that of ${\left\{\Ket{\psi_l}^{B^\prime B}\right\}}_{l}$~\cite{Y12}, \textit{i.e.},
\begin{equation}
  \label{eq:relative}
  \sum_{l=0}^{D-1}\alpha_l\Ket{\psi_l}^{AB}\xrightarrow{\textup{LOCC}}\sum_{l=0}^{D-1}\alpha_l\Ket{\psi_l}^{B^\prime B}.
\end{equation}
Local state discrimination for ${\left\{\Ket{\psi_l}^{AB}\right\}}_{l}$ can be achieved by first performing the protocol for state merging of $\Ket{\psi}^{RAB}$ to transform $\Ket{\psi_l}^{AB}$ into $\Ket{\psi_l}^{B^\prime B}$ for any $l$, and then performing $B$'s measurement for discriminating $B$'s mutually orthogonal states ${\left\{\Ket{\psi_l}^{B^\prime B}\right\}}_{l}$.
Note that a similar connection also holds in the asymptotic scenario~\cite{A9}.

In contrast, achievability of local state discrimination does not necessarily imply that of state merging if a protocol achieving local state discrimination uses a technique of \textit{elimination}, \textit{i.e.}, the measurement for excluding some of the possibilities in ${\left\{\Ket{\psi_l}^{AB}\right\}}_{l}$.
A protocol for local state discrimination using elimination cannot be utilized for state merging because elimination destroys coherence of the superposition of ${\left\{\Ket{\psi_l}\right\}}_{l}$.
As for the known $2$-LOCC sets,
two-way LOCC protocols for local state discrimination in Refs.~\cite{N4,T7,T8} require elimination, and hence, \textit{are not applicable} to state merging straightforwardly.

We identify a $2$-LOCC set for which a two-way LOCC protocol for local state discrimination can be constructed \textit{without elimination},
and we construct the corresponding two-way protocol for state merging.
Consider a set ${\left\{\Ket{\psi_l}^{AB}\in\mathbb{C}^{11}\otimes\mathbb{C}^{11}\right\}}_{l=0,1,2}$ of three mutually orthogonal states
\begin{equation}
  \label{eq:s}
  \begin{split}
    \Ket{\psi_0}^{AB}\coloneqq&\sqrt{\frac{2}{11}}\Ket{\Phi_2}^{AB}
    \oplus\sqrt{\frac{9}{11}}\Ket{\Phi_9}^{AB},\\
    \Ket{\psi_1}^{AB}\coloneqq&\sqrt{\frac{2}{11}}\gamma_1 X_2^A\Ket{\Phi_2}^{AB}
    \oplus\sqrt{\frac{9}{11}}{\left(X_9^A\right)}^3\Ket{\Phi_9}^{AB},\\
    \Ket{\psi_2}^{AB}\coloneqq&\sqrt{\frac{2}{11}}\gamma_2 Z_2^A\Ket{\Phi_2}^{AB}
    \oplus\sqrt{\frac{9}{11}}{\left(X_9^A\right)}^6\Ket{\Phi_9}^{AB},
  \end{split}
\end{equation}
where each subsystem is decomposed into subspaces $\mathbb{C}^{11}=\mathbb{C}^{2}\oplus\mathbb{C}^{9}$,
$X_k^A$ and $Z_k^A$ are the generalized Pauli operator~\cite{W5} on a subspace $\mathbb{C}^k$ of $A$'s system for $A$'s part of $\Ket{\Phi_k}^{AB}\coloneqq\frac{1}{\sqrt{k}}\sum_{l=0}^{k-1}\Ket{l}^A\otimes\Ket{l}^B$, and $\gamma_1$ and $\gamma_2$ are nonreal complex numbers satisfying ${\left|\gamma_1\right|}^2=1$, ${\left|\gamma_2\right|}^2=1$, and $\gamma_2\neq\pm\textup{i}\gamma_1^2$.
Define a tripartite state
\begin{equation}
\label{eq:phi}
  \Ket{\psi}\coloneqq\frac{1}{\sqrt{3}}\sum_{l=0}^{2}\Ket{l}^R\otimes\Ket{\psi_l}^{AB},
\end{equation}
where ${\left\{\Ket{\psi_l}^{AB}\right\}}_{l=0,1,2}$ is given by Eq.~\eqref{eq:s}.
This state $\Ket{\psi}^{RAB}$ yields Theorem~\ref{thm:result} as follows.

\textit{Proof of the first statement in Theorem~\ref{thm:result}.}---The set ${\left\{\Ket{\psi_l}^{AB}\right\}}_{l=0,1,2}$ defined as Eq.~\eqref{eq:s} is shown to be a $2$-LOCC set~\cite{N4},
and hence, impossibility of local state discrimination by one-way LOCC yields impossibility of non-catalytic \textit{exact} state merging of $\Ket{\psi}^{RAB}$ defined as Eq.~\eqref{eq:phi} at zero entanglement cost by one-way LOCC\@.
Since the set of one-way LOCC maps is compact~\cite{C7},
this impossibility of non-catalytic exact state merging by one-way LOCC implies that there exists a sufficiently small but nonzero error $\epsilon>0$ such that non-catalytic \textit{approximate} state merging of $\Ket{\psi}^{RAB}$ within $\epsilon$ is still impossible at zero entanglement cost by one-way LOCC\@.
Note that the no-go theorem on local state discrimination by one-way LOCC in Ref.~\cite{N4} does not straightforwardly generalize to scenarios where catalytic use of entanglement is allowed, due to the existence of entanglement discrimination catalysis~\cite{Y15}.
We also construct a one-way LOCC protocol for state merging of $\Ket{\psi}^{RAB}$ achieving one ebit of entanglement cost and zero error, \textit{i.e.}, $\log_2 K =1$ and $F^2\left(\tilde\psi,\Ket{\psi}\Bra{\psi}\right)=1$, based on a method established in Ref.~\cite{Y12} using the Koashi-Imoto decomposition~\cite{K3,H6,K5,W4} (See Supplemental Material for detail), which yields the first statement in Theorem~\ref{thm:result}.

\textit{Proof of the second statement in Theorem~\ref{thm:result}.}---We construct a two-way LOCC protocol for state merging of $\Ket{\psi}^{RAB}$ defined as Eq.~\eqref{eq:phi} achieving zero entanglement cost and zero error, \textit{i.e.}, $\log_2 K =0$ and $F^2\left(\tilde\psi,\Ket{\psi}\Bra{\psi}\right)=1$.
While three maximally entangled two-qubit states $\left\{\Ket{\Phi_2}^{AB},\gamma_1 X_2^A\Ket{\Phi_2}^{AB},\gamma_2 Z_2^A\Ket{\Phi_2}^{AB}\right\}$ used as part of Eq.~\eqref{eq:s} cannot be discriminated by any LOCC measurement~\cite{G10},
our two-way LOCC protocol begins with $B$'s measurement represented as measurement (Kraus) operators ${\left\{M_j^B\right\}}_{j=0,1,2}$ satisfying $\sum_j{M_j^B}^\dag M_j^B=\openone$, in order for the additional terms on $A$'s subspace $\mathbb{C}^9$ in Eq.~\eqref{eq:s} to be mutually orthogonal.
Using this orthogonality, $A$ can perform a thirty-three-outcome measurement represented as ${\left\{M_{k|j}^A\right\}}_{k=0,\ldots,32}$ conditioned by $B$'s measurement outcome $j$ and satisfying $\sum_k{M_{k|j}^A}^\dag M_{k|j}^A=\openone$ for each $j$,
so that for each measurement outcome $j$ and $k$, mutually orthogonal states ${\left\{\Ket{\psi_l}^{AB}\right\}}_{l=0,1,2}$ defined as Eq.~\eqref{eq:s} can be transformed into mutually orthogonal states of $B$.
Then, $B$'s local isometry correction conditioned by $j$ and $k$ provides $\Ket{\psi}^{RB^\prime B}$ (See Supplemental Material for detail), which yields the second statement in Theorem~\ref{thm:result}.

\textit{Partial quantum information and spread quantum information.}---A conventional interpretation of entanglement cost in state merging as \textit{partial quantum information}~\cite{H3} is based on the one-way communication picture in analogy to the classical Slepian-Wolf problem,
but Theorem~\ref{thm:result} implies that this interpretation is not straightforwardly applicable to the entanglement cost under two-way LOCC, which can be different in quantity from that under one-way LOCC\@.
Instead, we consider another aspect of state merging in association with local state discrimination; that is, state merging and local state discrimination can be viewed as \textit{distributed decoding} of information encoded in a shared quantum state.
In local state discrimination for ${\left\{\Ket{\psi_l}^{AB}\right\}}_l$, the index $l$ can be regarded as classical information encoded in $\Ket{\psi_l}^{AB}$, and local state discrimination aims to decode this classical information by LOCC\@.
In the same way, state merging of $\Ket{\psi}^{RAB}$ can be regarded as distributed decoding of \textit{quantum information}~\cite{Y13} by entanglement-assisted LOCC, in the sense that a protocol for state merging decodes an arbitrary superposition of fixed mutually orthogonal states shared between $A$ and $B$ into the same superposition of $B$'s states, as shown in Formula~\eqref{eq:relative}.
These notions of information may be \textit{nonlocally} encoded in the shared quantum state~\cite{B6}, in the sense that neither $A$ nor $B$ has local access to such nonlocally encoded information.

From this viewpoint, a nonlocal property of the map $\mathcal{D}^{AB\to B}\left(\sum_l\alpha_l\Ket{\psi_l}^{AB}\right)=\sum_l\alpha_l\Ket{\psi_l}^{B^\prime B}$ for decoding quantum information nonlocally spread in $\sum_l\alpha_l\Ket{\psi_l}^{AB}$ is characterized by the minimal entanglement cost in state merging, which we name \textit{spread quantum information} for $B$.
This map $\mathcal{D}^{AB\to B}$ is an isometry map that can be defined for every given $\Ket{\psi}^{RAB}$~\cite{Y12,Y13}.
Note that if catalytic use of entanglement is allowed, negative entanglement cost can also be viewed as a net gain of entanglement resources from the redundant part of $\sum_l\alpha_l\Ket{\psi_l}^{AB}$ as discussed in Ref.~\cite{Y12}, and the gained entanglement can be used as a resource for distributed decoding in future.
Spread quantum information for $B$ characterizes how nonlocally for $B$ the encoded one-shot quantum information is spread, while it is not invariant under reversal of $A$ and $B$ similarly to partial quantum information conditioned by $B$'s quantum side information.

\textit{Conclusion.}---
We showed that in a one-shot scenario of state merging, the minimal entanglement cost under \textit{one-way} LOCC and that under \textit{two-way} LOCC can be different, whereas these costs coincide in the asymptotic scenario.
Our analysis employs an interconnection between state merging and local state discrimination to demonstrate a provable separation between one-way LOCC and two-way LOCC in entanglement cost of state merging (a comparison to other quantum tasks is summarized in Table~\ref{table:compare}).
Based on this interconnection, one-shot state merging can be viewed as distributed decoding of one-shot quantum information nonlocally encoded and spread in a composite quantum system.
The minimal entanglement cost in one-shot state merging under two-way LOCC, named \textit{spread quantum information}, quantitatively characterizes how nonlocally one-shot quantum information is spread, and it cannot be interpreted as partial quantum information.
On the contrary, if we regard our two-way LOCC protocol for state merging as $B$'s preprocessing of quantum side information and backward classical communication from $B$ to $A$ followed by one-way LOCC from $A$ to $B$, our results suggest that such preprocessing can be indispensable for minimizing entanglement cost in a one-shot state merging from $A$ to $B$.
In this regard, our results open the way to further research on how to utilize quantum side information and two-way multi-round communication in quantum information theory.

\acknowledgments{%
  This work was supported by Grant-in-Aid for JSPS Research Fellow, JSPS KAKENHI Grant Number 26330006, 15H01677, 16H01050, 17H01694, 18H04286, 18J10192, and the Q-LEAP project of the MEXT Japan.
}

\bibliography{citation_bibtex}

\clearpage
\appendix
\widetext%
\begin{center}
\textbf{\large Supplemental Materials}
\end{center}

\section{Notations}

We represent a system indexed by $A$ as a finite-dimensional Hilbert space $\mathcal{H}^A$.
Superscripts of an operator or a vector represent the indices of the corresponding Hilbert spaces, e.g., $\psi^{RA}$ on $\mathcal{H}^R\otimes\mathcal{H}^A$ for a mixed state and $\Ket{\psi}^{RAB}\in\mathcal{H}^R\otimes\mathcal{H}^A\otimes\mathcal{H}^B$ for a pure state.
We may write an operator corresponding to a pure state as $\psi^{RAB}\coloneqq\Ket{\psi}\Bra{\psi}^{RAB}$.
A reduced state may be represented by superscripts, such as $\psi^{B}\coloneqq\tr_{RA}\psi^{RAB}$.
The identity operator and the identity map on $\mathcal{H}^R$ are denoted by $\openone^R$ and $\id^R$, respectively, which may be omitted if obvious.
In particular, we may write the identity operator on a $k$-dimensional Hilbert space as $\openone_k$ for clarity.
The computational basis of any $k$-dimensional Hilbert space is written as $\left\{\Ket{0},\ldots,\Ket{k-1}\right\}$.
The generalized Pauli operators on a $k$-dimensional Hilbert space are denoted by
$X_k\coloneqq\sum_{l=0}^{k-1}\Ket{l+1\bmod k}\Bra{l}$ and
$Z_k\coloneqq\sum_{l=0}^{k-1}\exp\left(\frac{\textup{i}2\pi l}{k}\right)\Ket{l}\Bra{l}$.
A maximally entangled state with Schmidt rank $K$ is denoted by
$\Ket{\Phi_K}\coloneqq\frac{1}{\sqrt{K}}\sum_{l=0}^{K-1}\Ket{l}\otimes\Ket{l}$.

We repeat Eqs.~\eqref{eq:s} and~\eqref{eq:phi} in the main text for readability
\begin{equation}
  \begin{aligned}
    \Ket{\psi_0}^{AB}\coloneqq&\sqrt{\frac{2}{11}}\Ket{\Phi_2}^{AB}
    \oplus\sqrt{\frac{9}{11}}\Ket{\Phi_9}^{AB},\\
    \Ket{\psi_1}^{AB}\coloneqq&\sqrt{\frac{2}{11}}\gamma_1 X_2^A\Ket{\Phi_2}^{AB}
    \oplus\sqrt{\frac{9}{11}}{\left(X_9^A\right)}^3\Ket{\Phi_9}^{AB},\\
    \Ket{\psi_2}^{AB}\coloneqq&\sqrt{\frac{2}{11}}\gamma_2 Z_2^A\Ket{\Phi_2}^{AB}
    \oplus\sqrt{\frac{9}{11}}{\left(X_9^A\right)}^6\Ket{\Phi_9}^{AB},
  \end{aligned}
  \tag{\ref{eq:s}}
\end{equation}
\begin{equation}
  \Ket{\psi}\coloneqq\frac{1}{\sqrt{3}}\sum_{l=0}^{2}\Ket{l}^R\otimes\Ket{\psi_l}^{AB}.
  \tag{\ref{eq:phi}}
\end{equation}
For clarity, we note that vectors on a direct sum of Hilbert spaces such as the right hand sides of Eq.~\eqref{eq:s} in the main text
\begin{equation}
  \left(\sum_{m=0}^{1}\alpha_{m}\Ket{m}\right)\oplus\left(\sum_{m=0}^{8}\beta_{m}\Ket{m}\right)\in\mathbb{C}^2\oplus\mathbb{C}^9
\end{equation}
can be regarded as
\begin{equation}
  \left(\sum_{m=0}^{1}\alpha_{m}\Ket{m}\right)+\left(\sum_{m=2}^{10}\beta_{m-2}\Ket{m}\right)\in\mathbb{C}^{11}.
\end{equation}

\section{One-way LOCC protocol achieving non-catalytic exact state merging at one ebit of entanglement cost}

While we show in the main text that non-catalytic approximate state merging of $\Ket{\psi}^{RAB}$ defined as Eq.~\eqref{eq:phi} in the main text cannot be achieved at zero entanglement cost and within a sufficiently small but nonzero error, we construct here a one-way LOCC protocol for this task achieving one ebit of entanglement cost and zero error, which yields the first statement of Theorem~\ref{thm:result} in the main text.
Note that this one-way LOCC protocol is less costly than a trivial protocol performing quantum teleportation of $A$'s part of $\Ket{\psi}^{RAB}$ of an eleven-dimensional system.
Our one-way LOCC protocol is based on the protocol for non-catalytic exact state merging established in Ref.~\cite{Y12}, which uses a decomposition theorem called the Koashi-Imoto decomposition~\cite{K3,H6,K5,W4}.
While the general protocol shown in Ref.~\cite{Y12} requires $\log_2 3$ ebits of entanglement cost for $\Ket{\psi}^{RAB}$, we modify this protocol, using a specific structure of $\Ket{\psi}^{RAB}$, to achieve one ebit of entanglement cost.
In the following, we mainly discuss this specific part in our case of non-catalytic exact state merging of $\Ket{\psi}^{RAB}$, and regarding construction of the general protocol, refer to Ref.~\cite{Y12}.

The Koashi-Imoto decomposition is shown as follows.
Note that the decomposition corresponding to that called maximal in Ref~\cite{K3} is \textit{uniquely} determined.
Refer to Ref.~\cite{Y12} for how to obtain this decomposition.

\begin{lemma}
\label{lem:koashi_imoto_decomposition_tripartite}
  (Lemma~11 in Ref.~\cite{W4}, in Section~II~C in Ref.~\cite{Y12})
  \textit{Koashi-Imoto decomposition of a tripartite pure state.}
  Given any pure state $\Ket{\psi}^{RAB}$,
  there exists an algorithmic procedure to obtain a unique decomposition of $\mathcal{H}^A$ and $\mathcal{H}^B$
  \begin{equation}
    \mathcal{H}^A=\bigoplus_{j=0}^{J-1}\mathcal{H}^{a_j^\textup{L}}\otimes\mathcal{H}^{a_j^\textup{R}},\quad
    \mathcal{H}^B=\bigoplus_{j=0}^{J-1}\mathcal{H}^{b_j^\textup{L}}\otimes\mathcal{H}^{b_j^\textup{R}},
  \end{equation}
  such that
  $\Ket{\psi}^{RAB}$ is uniquely decomposed into
  \begin{equation}
    \Ket{\psi}^{RAB}=\bigoplus_{j=0}^{J-1}\sqrt{p\left(j\right)}\Ket{\omega_j}^{a_j^\textup{L} b_j^\textup{L}}\otimes\Ket{\phi_j}^{R a_j^\textup{R} b_j^\textup{R}},
  \end{equation}
  where $p\left(j\right)$ is a probability distribution.
\end{lemma}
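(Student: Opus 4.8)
The plan is to reduce the tripartite pure-state statement to the original single-system Koashi-Imoto structure theorem of Ref.~\cite{K3}, and then lift the resulting decomposition of $\mathcal{H}^B$ to a compatible decomposition of $\mathcal{H}^A$ by purification. First I would pass from $\Ket{\psi}^{RAB}$ to the reduced state $\psi^{RB}=\tr_A\psi^{RAB}$ and ask for the maximal factorization of $B$ separating the part carrying correlation with $R$ from a fixed part. Concretely, I would consider the ensemble of states into which $B$ is steered by operations on $R$: applying a measurement on $R$ and tracing out $A$ leaves a state of the form $\sum_j q(j)\,\omega_j^{b_j^\textup{L}}\otimes\sigma_j^{b_j^\textup{R}}$ in which the factor $\omega_j^{b_j^\textup{L}}$ is the same for every $R$-measurement outcome while only the weights $q(j)$ and the $b_j^\textup{R}$-states vary. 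Feeding this steering ensemble into the Koashi-Imoto theorem yields a decomposition $\mathcal{H}^B=\bigoplus_j\mathcal{H}^{b_j^\textup{L}}\otimes\mathcal{H}^{b_j^\textup{R}}$ with fixed states $\omega_j^{b_j^\textup{L}}$ such that $\psi^{RB}=\bigoplus_j p(j)\,\omega_j^{b_j^\textup{L}}\otimes\phi_j^{R b_j^\textup{R}}$, with $b_j^\textup{L}$ decoupled from $R$ and $b_j^\textup{R}$ carrying all the $R$-correlation. Equivalently, this decomposition is the Artin--Wedderburn decomposition of the canonical finite-dimensional $*$-algebra $\mathcal{A}^B=\bigoplus_j\openone^{b_j^\textup{L}}\otimes\mathcal{B}(\mathcal{H}^{b_j^\textup{R}})$ of operators visible to $R$-steering; reading off its block-diagonal tensor structure simultaneously gives existence, the block-product form of $\psi^{RB}$, and an explicit iterative algorithm for the $b_j^\textup{L}/b_j^\textup{R}$ split.

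Next I would lift this $B$-side decomposition to the pure state. Since the summands of $\mathcal{H}^B$ are mutually orthogonal, the blocks are already distinguished by the $B$-marginal, so any purification of $\psi^{RB}$ into $A$ must respect the direct-sum structure, giving $\mathcal{H}^A=\bigoplus_j\mathcal{H}^{A_j}$. Within each block the reduced state is the product $p(j)\,\omega_j^{b_j^\textup{L}}\otimes\phi_j^{R b_j^\textup{R}}$, and because a purification of a product state factorizes (up to a local isometry on the purifying system) into a product of purifications of the two factors, the purifying system splits as $\mathcal{H}^{A_j}=\mathcal{H}^{a_j^\textup{L}}\otimes\mathcal{H}^{a_j^\textup{R}}$, with $a_j^\textup{L}$ purifying $\omega_j^{b_j^\textup{L}}$ into $\Ket{\omega_j}^{a_j^\textup{L}b_j^\textup{L}}$ and $a_j^\textup{R}$ purifying $\phi_j^{R b_j^\textup{R}}$ into $\Ket{\phi_j}^{R a_j^\textup{R}b_j^\textup{R}}$. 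Reassembling the blocks reproduces exactly the claimed form $\Ket{\psi}^{RAB}=\bigoplus_j\sqrt{p(j)}\Ket{\omega_j}^{a_j^\textup{L}b_j^\textup{L}}\otimes\Ket{\phi_j}^{R a_j^\textup{R}b_j^\textup{R}}$.

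For uniqueness I would invoke the maximality clause of the Koashi-Imoto theorem: the algebra $\mathcal{A}^B$, equivalently the decomposition extracting the largest possible fixed factors $\omega_j^{b_j^\textup{L}}$, is canonically determined by $\psi^{RB}$, so the $B$-decomposition is unique, and the induced $A$-decomposition is then unique up to the unavoidable local unitary freedom within each purifying factor $a_j^\textup{L}$ and $a_j^\textup{R}$. The step I expect to be the crux is precisely this existence-and-uniqueness of the \emph{maximal} $B$-decomposition, i.e.\ that among all factorizations separating the $R$-correlated part of $B$ from a fixed part there is a unique finest one; this is the substance of the original Koashi-Imoto theorem, which I would cite rather than reprove, the standard argument being that two admissible decompositions always admit a common refinement that is again admissible, so by finite-dimensionality a unique maximal element exists. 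The remaining pure-state lift is comparatively routine, the only care being the block-wise bookkeeping and the observation that orthogonality of the $B$-summands forbids the purifying isometry on $A$ from mixing blocks.
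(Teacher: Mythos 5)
The paper offers no proof of this lemma at all: it is imported verbatim as Lemma~11 of Ref.~\cite{W4} (as reviewed in Sec.~II~C of Ref.~\cite{Y12}), with an explicit pointer to Ref.~\cite{Y12} for how the decomposition is obtained. So your proposal cannot be checked against an in-paper argument, only against the derivation in the cited works---and there your route is essentially the standard one: apply the Koashi--Imoto theorem~\cite{K3,H6,K5} to the family of $B$-states steered by measurements on $R$ (equivalently, read off the Artin--Wedderburn structure of the associated fixed-point algebra), obtain $\psi^{RB}=\bigoplus_j p\left(j\right)\omega_j^{b_j^\textup{L}}\otimes\phi_j^{R b_j^\textup{R}}$, and lift to $A$ by purification. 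Your outline is correct in substance, including correctly delegating the crux (existence and uniqueness of the maximal decomposition) to the original theorem.

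Two steps deserve tightening. First, the KI theorem gives the block-product form of each steered state separately; to conclude the block-product form of the \emph{joint} state $\psi^{RB}$ you need a short linearity argument: the map $E\mapsto\tr_R\left[\left(E\otimes\openone\right)\psi^{RB}\right]$ is linear, a tomographically spanning finite set of effects $E$ suffices for the KI theorem, and the set of operators of the form $\bigoplus_j\omega_j\otimes X_j$ is closed under real-linear combinations, so every matrix element $\Bra{r}\psi^{RB}\Ket{r'}$ inherits the form; you asserted this transition without comment. Second, your justification of the lift---``orthogonality of the $B$-summands forbids the purifying isometry on $A$ from mixing blocks''---names the wrong mechanism: orthogonality of the $B$-blocks by itself only places the cross terms in the off-diagonal $\left(j,j'\right)$ blocks of $\psi^{RB}$; what forces the $A$-split is that these blocks \emph{vanish}. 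Writing $\Ket{\psi}=\sum_j\sqrt{p\left(j\right)}\Ket{\psi_j}$ with $\Ket{\psi_j}\propto\left(\openone^{RA}\otimes\Pi_j^B\right)\Ket{\psi}$, block-diagonality of $\psi^{RB}$ gives $\tr_A\left[\Ket{\psi_j}\Bra{\psi_{j'}}\right]=0$ for $j\neq j'$, which---viewing each branch as a matrix $M_j\colon\mathcal{H}^A\to\mathcal{H}^{RB}$ and noting $M_j M_{j'}^\dag=0$ iff the supports of $M_j^\dag M_j$ and $M_{j'}^\dag M_{j'}$ are orthogonal---is equivalent to orthogonality of the $A$-supports of the branches. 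Only then does $\mathcal{H}^A$ (restricted to the support of $\psi^A$, with unused dimensions padded arbitrarily) split as $\bigoplus_j\mathcal{H}^{A_j}$, after which your product-purification factorization and the isometric freedom of purifications give $\mathcal{H}^{A_j}\cong\mathcal{H}^{a_j^\textup{L}}\otimes\mathcal{H}^{a_j^\textup{R}}$ as claimed. With these two patches your reconstruction matches the cited derivation, and your uniqueness clause (maximality on $B$, canonical identification on $A$ up to local unitaries and padding) is exactly the sense in which the lemma's decomposition is unique.
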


Using the Koashi-Imoto decomposition, we obtain the following.

\begin{proposition}
  \textit{Entanglement cost in a non-catalytic exact state merging by one-way LOCC.}
  There exists a one-way LOCC protocol for non-catalytic exact state merging of $\Ket{\psi}^{RAB}$ defined as Eq.~\eqref{eq:phi} in the main text achieving
  \begin{equation}
    \label{eq:one_ebit}
    \log_2 K=1,
  \end{equation}
  where non-catalytic exact state merging corresponds to zero error $\epsilon=0$.
\end{proposition}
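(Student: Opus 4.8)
The plan is to follow the Koashi--Imoto (KI) one-way protocol of Ref.~\cite{Y12}, specialized and then sharpened for the state $\Ket{\psi}^{RAB}$ of Eq.~\eqref{eq:phi}. First I would make the decomposition of Lemma~\ref{lem:koashi_imoto_decomposition_tripartite} fully explicit. The crucial preprocessing is to resolve the nine-dimensional sector along the orbits of the shift $X_9^3$: writing $m=3s+r$ with $s,r\in\{0,1,2\}$ identifies $\mathbb{C}^9\cong\mathbb{C}^3_s\otimes\mathbb{C}^3_r$ so that $\left(X_9^A\right)^{3l}=\left(X_3^{A_s}\right)^l\otimes\openone^{A_r}$ and $\Ket{\Phi_9}^{AB}=\Ket{\Phi_3}^{A_sB_s}\otimes\Ket{\Phi_3}^{A_rB_r}$. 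Hence the $\mathbb{C}^9$ sector factorizes into a reference-dependent shift register $\left(X_3^{A_s}\right)^l\Ket{\Phi_3}^{A_sB_s}$ and a \emph{fixed} maximally entangled block $\Ket{\Phi_3}^{A_rB_r}$ independent of $l$, whereas the $\mathbb{C}^2$ sector of Eq.~\eqref{eq:s} carries the three Bell-type states $U_l\Ket{\Phi_2}^{A_2B_2}$ with $U_0=\openone$, $U_1=\gamma_1 X_2$, $U_2=\gamma_2 Z_2$ and admits no fixed factor. This yields two KI blocks, with trivial $\omega$-part on $\mathbb{C}^2$ and rank-three $\omega$-part $\Ket{\Phi_3}^{A_rB_r}$ on $\mathbb{C}^9$, while the reference $R$ remains entangled \emph{coherently across} the two sectors.

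Applying the generic protocol of Ref.~\cite{Y12} to this decomposition transfers $A$'s systems to $B$ but pays $\log_2 3$ ebits, because relocating the $A$-half of the fixed rank-three block $\Ket{\Phi_3}^{A_rB_r}$ to $B$ by teleportation costs $\log_2 3$. The improvement to one ebit comes from \emph{not} treating the two sectors separately. The index $l$ is encoded redundantly, both through $U_l$ on $\mathbb{C}^2$ and through the shift on $A_s$, and $B$ already holds the shifted partner $B_s$ and the fixed partner $B_r$ as quantum side information. The plan is therefore to have $A$ perform a single sector-coherent instrument that (i) leaves the inter-sector coherence intact, (ii) exports the shift data classically, and (iii) reduces the genuinely nonlocal coherence that must be sent to a single qubit $Q$; $A$ then teleports $Q$ using the shared $\Ket{\Phi_2}^{AB}$ and forwards its measurement outcomes, using classical communication only from $A$ to $B$, so that one-wayness is preserved. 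On receipt, $B$ uses the classical data, its side information $B_s,B_r$, and local operations to undo the shift, to regenerate the fixed block $\Ket{\Phi_3}^{B^\prime_r B_r}$ consistently with the preserved coherence, and to apply an isometry correction producing $\Ket{\psi}^{RB^\prime B}$. A final verification checks that the output equals $\Ket{\psi}\Bra{\psi}^{RB^\prime B}$, i.e.\ $F^2=1$ at $\epsilon=0$.

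The main obstacle is step~(iii): showing that one ebit genuinely suffices, i.e.\ beating the naive $\log_2 3$. The difficulty is that $\Ket{\Phi_3}^{A_rB_r}$ is maximally entangled of rank three, and taken in isolation its $A$-half cannot be relocated to $B$ below $\log_2 3$ ebits by any LOCC. The resolution must exploit that $A_r$ lives only inside the $\mathbb{C}^9$ sector and is tied to the reference coherence through the shift, so that its relocation can be folded into the single qubit $Q$ that carries the $\mathbb{C}^2$ Bell coherence. Making this folding precise---constructing $A$'s instrument so that no inter-sector coherence is measured while the residual coherent core has Schmidt rank exactly two across the $A$ versus $RB$ cut, and checking that $B$'s local regeneration of $\Ket{\Phi_3}^{B^\prime_r B_r}$ reproduces the target on the nose---is where the specific choice of dimensions ($2$ and $9=3\times 3$) and the shift by three are essential, and is the step requiring the most care.
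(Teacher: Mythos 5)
Your decomposition is correct and is in fact the same structure the paper uses: the factorization $\mathbb{C}^9\cong\mathbb{C}^3_s\otimes\mathbb{C}^3_r$ with ${\left(X_9\right)}^{3l}={\left(X_3^{A_s}\right)}^l\otimes\openone^{A_r}$ is the paper's Koashi--Imoto decomposition in disguise; the paper writes it as three three-dimensional blocks spanned by $\left\{\Ket{0},\Ket{3},\Ket{6}\right\}$, $\left\{\Ket{1},\Ket{4},\Ket{7}\right\}$, and $\left\{\Ket{2},\Ket{5},\Ket{8}\right\}$, with the perfectly correlated block indices of $A$ and $B$ playing the role of your fixed $\Ket{\Phi_3}^{A_rB_r}$. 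However, there is a genuine gap, and it begins with a false premise: you locate the ``main obstacle'' in relocating the fixed block, claiming that the $A$-half of $\Ket{\Phi_3}^{A_rB_r}$ ``taken in isolation cannot be relocated to $B$ below $\log_2 3$ ebits by any LOCC.'' That is not true. Because $\Ket{\Phi_3}^{A_rB_r}$ is a \emph{known, fixed} state uncorrelated with $R$, merging it costs nothing: $A$ measures $A_r$ in any fixed basis and announces the outcome, $B$'s half collapses to a known pure state, and $B$ locally re-prepares $\Ket{\Phi_3}^{B_r^\prime B_r}$, since entanglement between $B$'s own subsystems is free. Consequently no ``folding'' of a rank-three relocation into a single qubit $Q$ is needed, and your guiding heuristic---that $A$'s instrument must leave a residual coherent core of Schmidt rank exactly two across the $A$ versus $RB$ cut---is also inaccurate as stated ($A$'s reduced state has full rank eleven; the actual protocol operates branch-wise rather than by isolating a global rank-two core).

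What the proof actually requires, and what your proposal leaves unconstructed, are two different ingredients. First, the expensive item in the generic protocol of Ref.~\cite{Y12} is not the fixed block but the $R$-correlated shift register, and it too merges at \emph{zero} cost: measuring $\frac{1}{3}\sum_{l,m}\Ket{l}^R\otimes\Ket{l+m\bmod 3}^{A_s}\otimes\Ket{m}^{B_s}$ on $A_s$ in the computational basis with outcome $c$ leaves the rank-three maximally entangled state $\frac{1}{\sqrt{3}}\sum_{l}\Ket{l}^R\otimes\Ket{c-l\bmod 3}^{B_s}$ between $R$ and $B$, which $B$ corrects by an isometry conditioned on $c$; only the two-dimensional Bell-type block genuinely needs teleportation, costing one ebit. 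Second---and this is the crux on which the paper spends its care---to combine the two sectors into a single coherent one-way instrument without leaking which-sector information, the shared ebit $\Ket{\Phi_2}^{\overline{A}\overline{B}}$ must also be \emph{consumed} in the nine-dimensional branch, by $A$ measuring $\overline{A}$ in the computational basis followed by $B$'s conditional isometry, so that the overall Kraus operators take the matched block form ${\left\{\Bra{j,m_2}\otimes\sigma_{j,m_2}\right\}}_{m_2}$ with respect to the decomposition of Eq.~\eqref{eq:psi_decomposition}. Your steps (i)--(iii) assert that such an instrument exists but explicitly defer its construction as ``the step requiring the most care''; since that construction is precisely the content of the proposition, the proposal as written does not yet constitute a proof, even though your decomposition and your one-qubit teleportation target are the right starting point.
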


\begin{proof}
The proof is by construction, and we present a one-way LOCC protocol achieving Eq.~\eqref{eq:one_ebit}.
For brevity, we define
\begin{align}
    &\Ket{\Psi_0}\coloneqq\Ket{\Phi_2},\\
    &\Ket{\Psi_1}\coloneqq\left(\gamma_1 X_2^A\otimes\openone^B\right)\Ket{\Phi_2},\\
    &\Ket{\Psi_2}\coloneqq\left(\gamma_2 Z_2^A\otimes\openone^B\right)\Ket{\Phi_2}.
\end{align}

Using Lemma~\ref{lem:koashi_imoto_decomposition_tripartite},
we obtain the following Koashi-Imoto decomposition of $\Ket{\psi}^{RAB}$.
The Hilbert spaces $\mathcal{H}^A=\mathbb{C}^{11}$ of $A$ and $\mathcal{H}^B=\mathbb{C}^{11}$ of $B$ are decomposed into
\begin{equation}
  \label{eq:h_decomposition}
  \mathcal{H}^A=\bigoplus_{j=0}^{3}\mathcal{H}^{a_j^\textup{R}},\quad
  \mathcal{H}^B=\bigoplus_{j=0}^{3}\mathcal{H}^{b_j^\textup{R}},
\end{equation}
where
\begin{align}
    &\dim\mathcal{H}^{a_0^\textup{R}}=\dim\mathcal{H}^{b_0^\textup{R}}=2,\\
    &\dim\mathcal{H}^{a_1^\textup{R}}=\dim\mathcal{H}^{b_1^\textup{R}}=3,\\
    &\dim\mathcal{H}^{a_2^\textup{R}}=\dim\mathcal{H}^{b_2^\textup{R}}=3,\\
    &\dim\mathcal{H}^{a_3^\textup{R}}=\dim\mathcal{H}^{b_3^\textup{R}}=3.
\end{align}
Note that for each $j\in\left\{0,\ldots,3\right\}$, $\mathcal{H}^{a_j^\textup{L}}$ and $\mathcal{H}^{b_j^\textup{L}}$ in Lemma~\ref{lem:koashi_imoto_decomposition_tripartite} do not explicitly appear in the decomposition in Eq.~\eqref{eq:h_decomposition}, since $\mathcal{H}^{a_j^\textup{L}}=\mathbb{C}$ and $\mathcal{H}^{b_j^\textup{L}}=\mathbb{C}$ in this case.
The state $\Ket{\psi}^{RAB}$ is decomposed into
\begin{equation}
    \Ket{\psi}^{RAB}=\sqrt{\frac{2}{11}}\Ket{\phi_0}^{Ra_0^\textup{R} b_0^\textup{R}}\oplus\bigoplus_{j=1}^{3}\sqrt{\frac{3}{11}}\Ket{\phi_j}^{Ra_j^\textup{R} b_j^\textup{R}},
\end{equation}
where
\begin{equation}
    \Ket{\phi_0}^{Ra_0^\textup{R} b_0^\textup{R}}\coloneqq\sqrt{\frac{1}{3}}\sum_{l=0}^{2}\Ket{l}^R\otimes\Ket{\Psi_l}^{a_0^\textup{R} b_0^\textup{R}},
\end{equation}
and for each $j\in\left\{1,2,3\right\}$,
\begin{equation}
  \Ket{\phi_j}^{Ra_j^\textup{R} b_j^\textup{R}}\coloneqq\sqrt{\frac{1}{9}}\sum_{l,m=0}^{2}\Ket{l}^R\otimes\Ket{l+m\bmod 3}^{a_j^\textup{R}}\otimes\Ket{m}^{b_j^\textup{R}}.
\end{equation}
While our definition of ${\left\{\Ket{\psi_l}^{AB}\right\}}_{l=0,1,2}$ in Eq.~\eqref{eq:s} in the main text uses the decomposition of each system $\mathbb{C}^{11}=\mathbb{C}^2\oplus\mathbb{C}^9$,
$\mathcal{H}^{a_0^\textup{R}}$ and $\mathcal{H}^{b_0^\textup{R}}$ in Eq.~\eqref{eq:h_decomposition} correspond to $\mathbb{C}^2$, $\mathcal{H}^{a_1^\textup{R}}$ and $\mathcal{H}^{b_1^\textup{R}}$ in Eq.~\eqref{eq:h_decomposition} correspond to a three-dimensional subspace of $\mathbb{C}^9$ spanned by $\left\{\Ket{0},\Ket{3},\Ket{6}\right\}$,
$\mathcal{H}^{a_2^\textup{R}}$ and $\mathcal{H}^{b_2^\textup{R}}$ correspond to that by $\left\{\Ket{1},\Ket{4},\Ket{7}\right\}$,
and $\mathcal{H}^{a_3^\textup{R}}$ and $\mathcal{H}^{b_3^\textup{R}}$ correspond to that by $\left\{\Ket{2},\Ket{5},\Ket{8}\right\}$.
Introducing auxiliary systems $\mathcal{H}^{a_0}$ of $A$ and $\mathcal{H}^{b_0}$ of $B$,
we can also write this decomposition as
\begin{equation}
  \label{eq:psi_decomposition}
  \begin{split}
    &\left(U^A\otimes U^B\right)\Ket{\psi}^{RAB}\\
    &=\sqrt{\frac{2}{11}}\Ket{0}^{a_0}\otimes\Ket{0}^{b_0}\otimes\Ket{\phi_0}^{Ra^\textup{R} b^\textup{R}}\\
    &\quad +\sum_{j=1}^{3}\sqrt{\frac{3}{11}}\Ket{j}^{a_0}\otimes\Ket{j}^{b_0}\otimes\Ket{\phi_j}^{Ra^\textup{R} b^\textup{R}}
  \end{split}
\end{equation}
where
\begin{align}
    &\dim\mathcal{H}^{a_0}=\dim\mathcal{H}^{b_0}=4,\\
    &\dim\mathcal{H}^{a^\textup{R}}=\max_j\left\{\dim\mathcal{H}^{a_j^\textup{R}}\right\}=3,\\
    &\dim\mathcal{H}^{b^\textup{R}}=\max_j\left\{\dim\mathcal{H}^{b_j^\textup{R}}\right\}=3,
\end{align}
$U^A$ is $A$'s local isometry from $\mathcal{H}^A$ to $\mathcal{H}^{a_0}\otimes\mathcal{H}^{a^\textup{R}}$, and
$U^B$ is $B$'s local isometry from $\mathcal{H}^B$ to $\mathcal{H}^{b_0}\otimes\mathcal{H}^{b^\textup{R}}$.

Using the Koashi-Imoto decomposition in the form of Eq.~\eqref{eq:psi_decomposition}, Ref.~\cite{Y12} shows a one-way LOCC protocol for non-catalytic exact state merging.
In this protocol, three subprocesses $1$, $2$, and $3$ shown in Ref.~\cite{Y12} are combined using controlled measurements and controlled isometries, which are controlled by states of $\mathcal{H}^{a_0}$ and $\mathcal{H}^{b_0}$.
In the following, we discuss these three subprocesses in our case.
In particular, we modify Subprocess~2 using a specific structure of $\Ket{\psi}^{RAB}$ to achieve one ebit of entanglement cost.

\textit{Subprocess 1:} The first subprocess is concerned with reduced states on $\mathcal{H}^{a_j^\textup{L}}\otimes\mathcal{H}^{b_j^\textup{L}}$, and since $\mathcal{H}^{a_j^\textup{L}}$ and $\mathcal{H}^{b_j^\textup{L}}$ do not explicitly appear in the decomposition in Eq.~\eqref{eq:h_decomposition}, we do not perform this subprocess.

\textit{Subprocess 2:} The second subprocess is for transferring $A$'s part of $\Ket{\phi_j}^{Ra^\textup{R} b^\textup{R}}$ to $B$, so that $\Ket{\phi_j}^{R{\left(b^\prime\right)}^\textup{R} b^\textup{R}}$ can be obtained, where $\mathcal{H}^{{\left(b^\prime\right)}^\textup{R}}$ is $B$'s auxiliary system corresponding to $\mathcal{H}^{a^\textup{R}}$.
While Ref.~\cite{Y12} uses quantum teleportation in this subprocess to provide a general protocol,
there may exist cases where this subprocess can be achieved at less entanglement cost than performing quantum teleportation, as pointed out in Implication~4 in Ref.~\cite{Y12}.
As for our case, $\Ket{\phi_0}^{Ra^\textup{R} b^\textup{R}}$ is merged using quantum teleportation, which requires one ebit of an initially shared maximally entangled state $\Ket{\Phi_2}^{\overline{A}\overline{B}}$, where $\mathcal{H}^{\overline{A}}$ and $\mathcal{H}^{\overline{B}}$ are systems for the shared maximally entangled states of $A$ and $B$, respectively.
If $\Ket{\phi_1}^{Ra^\textup{R} b^\textup{R}}$, $\Ket{\phi_2}^{Ra^\textup{R} b^\textup{R}}$, or $\Ket{\phi_3}^{Ra^\textup{R} b^\textup{R}}$ are also merged in the same way,
$\log_2 3$ ebits are required.
In contrast, by performing $A$'s measurement on $\mathcal{H}^{a^\textup{R}}$ in the computational basis ${\left\{\Ket{m}^{a^\textup{R}}\right\}}_{m=0,1,2}$ followed by $B$'s isometry correction conditioned by $A$'s measurement outcome, no entanglement is required for merging $\Ket{\phi_1}^{Ra^\textup{R} b^\textup{R}}$, $\Ket{\phi_2}^{Ra^\textup{R} b^\textup{R}}$, and $\Ket{\phi_3}^{Ra^\textup{R} b^\textup{R}}$.
However, to coherently combine Subprocess~2 for $\Ket{\phi_0}^{Ra^\textup{R} b^\textup{R}}$, $\Ket{\phi_1}^{Ra^\textup{R} b^\textup{R}}$, $\Ket{\phi_2}^{Ra^\textup{R} b^\textup{R}}$, and $\Ket{\phi_3}^{Ra^\textup{R} b^\textup{R}}$,
one ebit of entanglement $\Ket{\Phi_2}^{\overline{A}\overline{B}}$ has to be consumed by $A$'s measurement on $\mathcal{H}^{\overline{A}}$ in the computational basis ${\left\{\Ket{m}^{\overline{A}}\right\}}_{m=0,1}$ followed by $B$'s isometry correction.
Consequently,
the LOCC map for Subprocess~2 can be written as a family of operators ${\left\{\Bra{j,m_2}\otimes\sigma_{j,m_2}\right\}}_{m_2}$ tracing out the post-measurement state of $A$, where $\Ket{0,m_2}$ and $\sigma_{0,m_2}$ corresponds to ${\left(U_j^\prime\right)}^\dag\Ket{\Phi_{j,m_2}}$ and $\sigma_{j,m_2}$ in Subprocess~2 shown in Ref.~\cite{Y12} based on quantum teleportation, and for each $j\in\left\{1,2,3\right\}$, ${\left\{\Ket{j,m_2}\right\}}_{m_2}$ and $\sigma_{j,m_2}$ are the computational basis for $A$'s measurement and the isometry for $B$'s correction conditioned by $A$'s measurement outcome $m_2$, respectively.

\textit{Subprocess 3:} The third subprocess is for merging states on $\mathcal{H}^{a_0}\otimes\mathcal{H}^{b_0}$, and this subprocess can be performed in the same way as Ref.~\cite{Y12}.

Combining these three subprocesses in the same way as Ref.~\cite{Y12}, we obtain the one-way LOCC protocol achieving Eq.~\eqref{eq:one_ebit}.

\end{proof}

\section{Two-way LOCC protocol achieving non-catalytic exact state merging at zero entanglement cost}

We provide a detailed description of the two-way LOCC protocol achieving non-catalytic exact state merging of $\Ket{\psi}$ defined as Eq.~\eqref{eq:phi} in the main text at zero entanglement cost, which yields the second statement of Theorem~\ref{thm:result} in the main text.
As discussed in the main text, protocols achieving local state discrimination using elimination, such as the two-way LOCC protocols for $2$-LOCC sets in Refs.~\cite{N4,T7,T8}, are not applicable to state merging in a straightforward way.
For example, consider a set of mutually orthogonal states $\big\{\Ket{\psi_0}^{AB}\coloneqq\Ket{0}^A\otimes\Ket{0}^B,\Ket{\psi_1}^{AB}\coloneqq\Ket{0}^A\otimes\Ket{1}^B,\Ket{\psi_2}^{AB}\coloneqq\Ket{1}^A\otimes\Ket{+}^B\big\}$, where $\Ket{+}\coloneqq\frac{1}{\sqrt{2}}\left(\Ket{0}+\Ket{1}\right)$.
If $A$ eliminates some of the possibilities by a measurement in basis $\left\{\Ket{0},\Ket{1}\right\}$, $B$'s local measurement conditioned by $A$'s outcome can discriminate the remaining mutually orthogonal states of $B$.
However, state merging of the corresponding tripartite state $\frac{1}{\sqrt{3}}\sum_{l=0}^{2}\Ket{l}^R\otimes\Ket{\psi_l}^{AB}$ is not achievable at zero entanglement cost due to the converse bound shown in Ref.~\cite{Y12}.
In contrast to such protocols for local state discrimination using elimination, our two-way LOCC protocol for state merging can also achieve local state discrimination of ${\left\{\Ket{\psi_l}\right\}}_{l=0,1,2}$ defined as Eq.~\eqref{eq:s} in the main text \textit{without elimination}.
In particular, we show the following.

\begin{proposition}
  \textit{Entanglement cost in a non-catalytic exact state merging by two-way LOCC.}
  There exists a two-way LOCC protocol for non-catalytic exact state merging of $\Ket{\psi}^{RAB}$ defined as Eq.~\eqref{eq:phi} in the main text achieving
  \begin{equation}
    \label{eq:zero}
    \log_2 K=0,
  \end{equation}
  where non-catalytic exact state merging corresponds to zero error $\epsilon=0$.
\end{proposition}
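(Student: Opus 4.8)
\emph{Proof strategy.} The plan is to construct the three–step two–way LOCC protocol announced in the main text---a measurement by $B$, a measurement by $A$ conditioned on $B$'s outcome, and a final local isometry by $B$---and to verify that it maps $\Ket{\psi}^{RAB}$ in Eq.~\eqref{eq:phi} exactly to $\Ket{\psi}^{RB^\prime B}$ while consuming \emph{no} shared maximally entangled state, i.e.\ $K=1$ and $\log_2 K=0$. The conceptual point is that the entanglement already contained in the $\mathbb{C}^9\otimes\mathbb{C}^9$ block of each $\Ket{\psi_l}^{AB}$ will itself play the role of the teleportation resource, so that $A$'s system can be absorbed into $B$ without any externally supplied ebit. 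Because the protocol transforms every superposition $\sum_l\alpha_l\Ket{\psi_l}^{AB}$ coherently (Formula~\eqref{eq:relative}), it must accomplish local state discrimination of $\{\Ket{\psi_l}^{AB}\}$ \emph{without elimination}; preserving this coherence is the feature distinguishing the construction from the two-way discrimination protocols of Refs.~\cite{N4,T7,T8}.

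\emph{Step 1: $B$'s measurement.} I would write each $\mathbb{C}^9$ factor as three ``blocks'' $\{0,1,2\},\{3,4,5\},\{6,7,8\}$, so that the shift $(X_9^A)^{3l}$ acting on $\Ket{\Phi_9}^{AB}$ cyclically permutes $A$'s block index by $l$. I then take $B$'s three Kraus operators to act as the block projector $\sum_{r=0}^2\Ket{3j+r}\Bra{3j+r}$ on $\mathbb{C}^9$ and as $\tfrac{1}{\sqrt3}\openone$ on $\mathbb{C}^2$; completeness $\sum_{j=0}^{2}{M_j^B}^\dag M_j^B=\openone$ follows because the $\mathbb{C}^2$ and $\mathbb{C}^9$ supports are orthogonal. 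Conditioned on outcome $j$, $B$'s projection sends $A$'s $\mathbb{C}^9$ component of $\Ket{\psi_l}^{AB}$ into block $(j+l)\bmod 3$, so the three components ($l=0,1,2$) acquire \emph{orthogonal supports on $A$}, while the $\mathbb{C}^2$ Bell states $\{\Ket{\Phi_2},\gamma_1 X_2^A\Ket{\Phi_2},\gamma_2 Z_2^A\Ket{\Phi_2}\}$ are merely rescaled by $\tfrac{1}{\sqrt3}$ and retain full $l$-coherence. One checks that both subspaces are scaled by the common factor $\tfrac{1}{\sqrt3}$, so the $\sqrt{2/11}:\sqrt{9/11}$ ratio is preserved.

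\emph{Steps 2 and 3: $A$'s conditional measurement and $B$'s correction.} Using the orthogonality produced in Step 1, $A$ performs the thirty-three-outcome measurement $\{M_{k|j}^A\}_{k=0,\dots,32}$ conditioned on $j$, with $\sum_k{M_{k|j}^A}^\dag M_{k|j}^A=\openone$. The design requirement is that, for every pair $(j,k)$, the Kraus operator factorizes the post-measurement state as (a fixed $A$-state)$\otimes$(a $B$-state) so that $A$ decouples, and the three resulting $B$-states (one per $l$) are \emph{mutually orthogonal with $l$-independent norm}. The residual entanglement left after Step 1---the rank-$3$ ``residue'' channel inside $\mathbb{C}^9$ together with the $\mathbb{C}^2$ Bell entanglement---serves as the teleportation resource that routes $A$'s $\mathbb{C}^2$ content and block index onto $B$. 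Finally, $B$ applies the isometry conditioned on $(j,k)$ that relabels these orthogonal states to $\{\Ket{\psi_l}^{B^\prime B}\}$, reproducing $\Ket{\psi}^{RB^\prime B}=\tfrac{1}{\sqrt3}\sum_l\Ket{l}^R\Ket{\psi_l}^{B^\prime B}$ with unit fidelity and $\log_2 K=0$.

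\emph{Main obstacle.} The crux is Step 2: I must exhibit $\{M_{k|j}^A\}$ that simultaneously (i) yields orthogonal, correctable $B$-states for \emph{every} outcome and (ii) leaves the amplitudes $\alpha_l$ untouched, i.e.\ eliminates no branch. These requirements pull in opposite directions, because $A$'s $\mathbb{C}^2$ reduced states coincide for all $l$ (the three Bell states share a maximally mixed marginal), so the information distinguishing the branches cannot be read locally from $\mathbb{C}^2$ and must instead be carried coherently through the $\mathbb{C}^9$ block that Step 1 has sorted. This is precisely where the specific data of Eq.~\eqref{eq:s}---the shifts $3$ and $6$, the $2:9$ dimension split, and the nonreal phases $\gamma_1,\gamma_2$ with $\gamma_2\neq\pm\textup{i}\gamma_1^2$---must be used to guarantee that a single $A$-measurement interlocks the $\mathbb{C}^2$ and $\mathbb{C}^9$ subspaces so that coherence survives. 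I would close the argument by verifying completeness of $\{M_{k|j}^A\}$ and that no $\Ket{\Phi_K}$ with $K>1$ ever enters, confirming that all entanglement is drawn from $\Ket{\psi}^{RAB}$ itself.
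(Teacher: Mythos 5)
Your Step 1 is sound and in fact coincides with the paper's construction: your block-projector Kraus operators are exactly the paper's $M_j^B$, and your bookkeeping is correct (conditioned on outcome $j$, the $\mathbb{C}^9$ component of the $l$-th branch lands in $A$'s block $(j+l)\bmod 3$, and the common $1/\sqrt{3}$ rescaling of both subspaces preserves the $\sqrt{2/11}:\sqrt{9/11}$ weights and the $l$-coherence). But from there your proposal stops short of a proof. The entire nontrivial content of the proposition is the \emph{existence} of a conditional measurement ${\left\{M_{k|j}^A\right\}}_{k=0,\ldots,32}$ that simultaneously satisfies completeness, decouples $A$, and produces, for \emph{every} outcome $k$, three conditional $B$-states that are mutually orthogonal with $l$-independent norms (equivalently, a rank-$3$ maximally entangled post-measurement state between $R$ and $B$, which is what $B$'s final isometry needs). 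You state these desiderata, correctly flag them as the main obstacle, and then neither exhibit the operators nor argue that they exist. The gap is not a formality: any outcome supported only on the $\mathbb{C}^2$ block necessarily violates your condition (iii), since an operator proportional to $\Bra{0}\oplus\boldsymbol{0}^\dag$ yields conditional $B$-states proportional to $\Ket{0}$, $\gamma_1\Ket{1}$, $\gamma_2\Ket{0}$, with the $l=0$ and $l=2$ branches parallel; while outcomes supported only on $\mathbb{C}^9$ can be individually correctable but can never complete the POVM, since their $M^\dag M$ sum to nothing on $\mathbb{C}^2$. Hence cross-block outcomes with specifically tuned phases are forced, and showing that such outcomes can be chosen to be simultaneously correctable \emph{and} complete is precisely the missing step---your observation that the two requirements ``pull in opposite directions'' is where the proof has to start, not where it may end.

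The paper closes this gap by explicit construction: for $j=0$ it gives $24$ rank-one Kraus operators $M_{k|0}=\Bra{\phi_{k|0}}$ with vectors of the schematic form
\begin{equation*}
  \Ket{\phi_{k|0}}=\pm\sqrt{\tfrac{3}{36}}\Ket{m}\oplus\sqrt{\tfrac{1}{36}}\left(\pm\Ket{r_0}\pm\Ket{r_1}-\overline{\gamma_2}\Ket{r_2}\right),\quad m\in\left\{0,1\right\},
\end{equation*}
where $\left(r_0,r_1,r_2\right)$ ranges over the transversal triples $\left(0,4,8\right)$, $\left(1,5,6\right)$, $\left(2,3,7\right)$---one basis vector drawn from each of $A$'s three $\mathbb{C}^3$ blocks---together with nine outcomes $\boldsymbol{0}\oplus\sqrt{28/36}\Ket{\omega_n^{\left(\cdot\right)}}$ in the Fourier bases of those triples to restore completeness; the correlated sign pattern and the coefficient $\overline{\gamma_2}$ are exactly what make the three conditional $B$-states orthogonal with equal norms for each $k$ (e.g., for $k=0$ one gets states proportional to $\Ket{0}\oplus\Ket{0}$, $\gamma_1\Ket{1}\oplus\Ket{1}$, and $\gamma_2\left(\Ket{0}\oplus\left(-\Ket{0}\right)\right)$), and the $j=1,2$ measurements are obtained by composing $M_{k|0}$ with the block shift ${\left(X_9\right)}^{3j}$. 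Your intuition that the residual $\mathbb{C}^9$ entanglement serves as the teleportation resource matches this construction, but without producing (or otherwise proving the existence of) such an interlocking family, your argument only establishes that the announced protocol \emph{would} work if Step 2 can be realized---which is the proposition itself. A minor side remark: the conditions that $\gamma_1,\gamma_2$ be nonreal with $\gamma_2\neq\pm\textup{i}\gamma_1^2$ are needed for the one-way impossibility half of Theorem~\ref{thm:result} (via Ref.~\cite{N4}), not to make the two-way protocol function, so they should not be invoked as what ``guarantees'' the coherent interlocking here.
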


\begin{proof}
  The proof is by construction, and we present a two-way LOCC protocol for non-catalytic exact state merging of $\Ket{\psi}$ achieving Eq.~\eqref{eq:zero}, using the notations introduced in the proof of the second statement in Theorem~\ref{thm:result} in the main text.
  In particular, we show $B$'s measurement ${\left\{M_j^B\right\}}_{j=0,1,2}$ and $A$'s measurement ${\left\{M_{k|j}^A\right\}}_{k=0,\ldots,32}$ conditioned by $B$'s measurement outcome $j$,
  where these measurement (Kraus) operators satisfy
  \begin{align}
    &\sum_{j=0}^2 {M_j^B}^\dag M_j^B=\openone,\\
    &\sum_{k=0}^{32} {M_{k|j}^A}^\dag M_{k|j}^A=\openone, \quad\forall j\in\left\{0,1,2\right\}.
  \end{align}
  In the following, $\mathcal{H}^A$ and $\mathcal{H}^B$ are decomposed in the same way as Eq.~\eqref{eq:s} in the main text for defining ${\left\{\Ket{\psi_l}^{AB}\right\}}_l$, that is,
  \begin{equation}
    \mathcal{H}^A=\mathbb{C}^2\oplus\mathbb{C}^9,\quad
    \mathcal{H}^B=\mathbb{C}^2\oplus\mathbb{C}^9.
  \end{equation}

  The measurement ${\left\{M_j^B\right\}}_{j=0,1,2}$ performed by $B$ is
  \begin{align}
    &M_0^B\coloneqq\sqrt{\frac{1}{3}}\left(\Ket{0}\Bra{0}+\Ket{1}\Bra{1}\right)\oplus\left(\Ket{0}\Bra{0}+\Ket{1}\Bra{1}+\Ket{2}\Bra{2}\right),\\
    &M_1^B\coloneqq\sqrt{\frac{1}{3}}\left(\Ket{0}\Bra{0}+\Ket{1}\Bra{1}\right)\oplus\left(\Ket{3}\Bra{3}+\Ket{4}\Bra{4}+\Ket{5}\Bra{5}\right),\\
    &M_2^B\coloneqq\sqrt{\frac{1}{3}}\left(\Ket{0}\Bra{0}+\Ket{1}\Bra{1}\right)\oplus\left(\Ket{6}\Bra{6}+\Ket{7}\Bra{7}+\Ket{8}\Bra{8}\right),
  \end{align}
  where each operator on the right-hand side is on $\mathbb{C}^2\oplus\mathbb{C}^9$.
  This measurement satisfies the completeness condition
  \begin{equation}
    \sum_{j=0}^{2}M_j^\dag M_j=\openone.
  \end{equation}

  We show $A$'s measurement ${\left\{M_{k|j}^A\right\}}_{k=0,\ldots,32}$ conditioned by $j\in\left\{0,1,2\right\}$.
  We first show the case of $j=0$, that is, ${\left\{M_{k|0}^A\right\}}_{k=0,\ldots,32}$, while a similar  construction applies to the cases of $j=1,2$ as discussed later.
  For brevity, we define a bipartite pure state $\Ket{\Psi}\in\mathbb{C}^9\otimes\mathbb{C}^9$ with Schmidt rank three as
  \begin{equation}
    \Ket{\Psi}\coloneqq\sqrt{\frac{1}{3}}\left(\Ket{0}\otimes\Ket{0}+\Ket{1}\otimes\Ket{1}+\Ket{2}\otimes\Ket{2}\right),
  \end{equation}
  and we also define the Fourier-basis states of three-dimensional subspaces of $\mathbb{C}^9$
  \begin{align}
    \Ket{\omega_{n}^{\left(0,4,8\right)}}&\coloneqq\frac{1}{\sqrt{3}}\Ket{0}+\frac{\exp\left(\frac{\textup{i}\pi n}{3}\right)}{\sqrt{3}}\Ket{4}+\frac{\exp\left(\frac{\textup{i}\pi 2n}{3}\right)}{\sqrt{3}}\Ket{8},\\
    \Ket{\omega_{n}^{\left(1,5,6\right)}}&\coloneqq\frac{1}{\sqrt{3}}\Ket{1}+\frac{\exp\left(\frac{\textup{i}\pi n}{3}\right)}{\sqrt{3}}\Ket{5}+\frac{\exp\left(\frac{\textup{i}\pi 2n}{3}\right)}{\sqrt{3}}\Ket{6},\\
    \Ket{\omega_{n}^{\left(2,3,7\right)}}&\coloneqq\frac{1}{\sqrt{3}}\Ket{2}+\frac{\exp\left(\frac{\textup{i}\pi n}{3}\right)}{\sqrt{3}}\Ket{3}+\frac{\exp\left(\frac{\textup{i}\pi 2n}{3}\right)}{\sqrt{3}}\Ket{7},
  \end{align}
  where $n\in\left\{0,1,2\right\}$.
  If $B$'s measurement outcome is $j=0$, the post-measurement state is
  \begin{equation}
    \Ket{\psi^{\left(0\right)}}^{RAB}=\frac{1}{\sqrt{3}}\sum_{l=0}^{2}\Ket{l}^R\otimes\Ket{\psi_l^{\left(0\right)}}^{AB},
  \end{equation}
  where
  \begin{align}
    \begin{split}
      \Ket{\psi_0^{\left(0\right)}}\coloneqq&\sqrt{\frac{2}{11}}\Ket{\Phi_2}\oplus\sqrt{\frac{9}{11}}\Ket{\Psi}\\
      =&\sqrt{\frac{1}{11}}\left(\Ket{0}\otimes\Ket{0}+\Ket{1}\otimes\Ket{1}\right)\oplus\\
       &\sqrt{\frac{3}{11}}\left(\Ket{0}\otimes\Ket{0}+\Ket{1}\otimes\Ket{1}+\Ket{2}\otimes\Ket{2}\right),
    \end{split}\\
    \begin{split}
      \Ket{\psi_1^{\left(0\right)}}\coloneqq&\sqrt{\frac{2}{11}}\left(\gamma_1 X_2\otimes\openone\right)\Ket{\Phi_2}\oplus\sqrt{\frac{9}{11}}\left({\left(X_9\right)}^3\otimes\openone\right)\Ket{\Psi}\\
      =&\sqrt{\frac{1}{11}}\gamma_1\left(\Ket{1}\otimes\Ket{0}+\Ket{0}\otimes\Ket{1}\right)\oplus\\
       &\sqrt{\frac{3}{11}}\left(\Ket{3}\otimes\Ket{0}+\Ket{4}\otimes\Ket{1}+\Ket{5}\otimes\Ket{2}\right),
    \end{split}\\
    \begin{split}
      \Ket{\psi_2^{\left(0\right)}}\coloneqq&\sqrt{\frac{2}{11}}\left(\gamma_2 Z_2\otimes\openone\right)\Ket{\Phi_2}\oplus\sqrt{\frac{9}{11}}\left({\left(X_9\right)}^6\otimes\openone\right)\Ket{\Psi}\\
      =&\sqrt{\frac{1}{11}}\gamma_2\left(\Ket{0}\otimes\Ket{0}-\Ket{1}\otimes\Ket{1}\right)\oplus\\
       &\sqrt{\frac{3}{11}}\left(\Ket{6}\otimes\Ket{0}+\Ket{7}\otimes\Ket{1}+\Ket{8}\otimes\Ket{2}\right).
    \end{split}
  \end{align}
  Note that in these definitions of $\Ket{\psi_0^{\left(0\right)}}$, $\Ket{\psi_1^{\left(0\right)}}$, and $\Ket{\psi_2^{\left(0\right)}}$, $A$'s parts of the second terms are on mutually orthogonal three-dimensional subspaces of $\mathbb{C}^9$ spanned by $\left\{\Ket{0},\Ket{1},\Ket{2}\right\}$, $\left\{\Ket{3},\Ket{4},\Ket{5}\right\}$, and $\left\{\Ket{6},\Ket{7},\Ket{8}\right\}$, respectively.
  In this case, $A$'s measurement ${\left\{M_{k|0}^A\right\}}_{k=0,\ldots,32}$ is in the form of
  \begin{equation}
    M_{k|0}\coloneqq\Bra{\phi_{k|0}},
  \end{equation}
  where $k\in\left\{0,\ldots,32\right\}$, the post-measurement state of $A$ is traced out, and $\Ket{\phi_{k|0}}\in\mathbb{C}^2\oplus\mathbb{C}^9$ is an unnormalized vector.
  Each $\Ket{\phi_{k|0}}$ is defined as
  \begin{alignat}{2}
    \Ket{\phi_{0 |0}}   &\coloneqq& \sqrt{\frac{3}{36}}\Ket{0}&\oplus\sqrt{\frac{1}{36}}\left( \Ket{0}+\Ket{4}-\overline{\gamma_2}\Ket{6}\right),\\
    \Ket{\phi_{1 |0}}   &\coloneqq&-\sqrt{\frac{3}{36}}\Ket{0}&\oplus\sqrt{\frac{1}{36}}\left( \Ket{0}+\Ket{4}-\overline{\gamma_2}\Ket{6}\right),\\
    \Ket{\phi_{2 |0}}   &\coloneqq& \sqrt{\frac{3}{36}}\Ket{1}&\oplus\sqrt{\frac{1}{36}}\left(-\Ket{0}+\Ket{4}-\overline{\gamma_2}\Ket{6}\right),\\
    \Ket{\phi_{3 |0}}   &\coloneqq&-\sqrt{\frac{3}{36}}\Ket{1}&\oplus\sqrt{\frac{1}{36}}\left(-\Ket{0}+\Ket{4}-\overline{\gamma_2}\Ket{6}\right),\\
    \Ket{\phi_{4 |0}}   &\coloneqq& \sqrt{\frac{3}{36}}\Ket{0}&\oplus\sqrt{\frac{1}{36}}\left( \Ket{0}-\Ket{4}-\overline{\gamma_2}\Ket{6}\right),\\
    \Ket{\phi_{5 |0}}   &\coloneqq&-\sqrt{\frac{3}{36}}\Ket{0}&\oplus\sqrt{\frac{1}{36}}\left( \Ket{0}-\Ket{4}-\overline{\gamma_2}\Ket{6}\right),\\
    \Ket{\phi_{6 |0}}   &\coloneqq& \sqrt{\frac{3}{36}}\Ket{1}&\oplus\sqrt{\frac{1}{36}}\left(-\Ket{0}-\Ket{4}-\overline{\gamma_2}\Ket{6}\right),\\
    \Ket{\phi_{7 |0}}   &\coloneqq&-\sqrt{\frac{3}{36}}\Ket{1}&\oplus\sqrt{\frac{1}{36}}\left(-\Ket{0}-\Ket{4}-\overline{\gamma_2}\Ket{6}\right),\\
    \Ket{\phi_{8 |0}}   &\coloneqq& \sqrt{\frac{3}{36}}\Ket{0}&\oplus\sqrt{\frac{1}{36}}\left( \Ket{1}+\Ket{5}-\overline{\gamma_2}\Ket{7}\right),\\
    \Ket{\phi_{9 |0}}   &\coloneqq&-\sqrt{\frac{3}{36}}\Ket{0}&\oplus\sqrt{\frac{1}{36}}\left( \Ket{1}+\Ket{5}-\overline{\gamma_2}\Ket{7}\right),\\
    \Ket{\phi_{10|0}}   &\coloneqq& \sqrt{\frac{3}{36}}\Ket{1}&\oplus\sqrt{\frac{1}{36}}\left(-\Ket{1}+\Ket{5}-\overline{\gamma_2}\Ket{7}\right),\\
    \Ket{\phi_{11|0}}   &\coloneqq&-\sqrt{\frac{3}{36}}\Ket{1}&\oplus\sqrt{\frac{1}{36}}\left(-\Ket{1}+\Ket{5}-\overline{\gamma_2}\Ket{7}\right),\\
    \Ket{\phi_{12|0}}   &\coloneqq& \sqrt{\frac{3}{36}}\Ket{0}&\oplus\sqrt{\frac{1}{36}}\left( \Ket{1}-\Ket{5}-\overline{\gamma_2}\Ket{7}\right),\\
    \Ket{\phi_{13|0}}   &\coloneqq&-\sqrt{\frac{3}{36}}\Ket{0}&\oplus\sqrt{\frac{1}{36}}\left( \Ket{1}-\Ket{5}-\overline{\gamma_2}\Ket{7}\right),\\
    \Ket{\phi_{14|0}}   &\coloneqq& \sqrt{\frac{3}{36}}\Ket{1}&\oplus\sqrt{\frac{1}{36}}\left(-\Ket{1}-\Ket{5}-\overline{\gamma_2}\Ket{7}\right),\\
    \Ket{\phi_{15|0}}   &\coloneqq&-\sqrt{\frac{3}{36}}\Ket{1}&\oplus\sqrt{\frac{1}{36}}\left(-\Ket{1}-\Ket{5}-\overline{\gamma_2}\Ket{7}\right),\\
    \Ket{\phi_{16|0}}   &\coloneqq& \sqrt{\frac{3}{36}}\Ket{0}&\oplus\sqrt{\frac{1}{36}}\left( \Ket{2}+\Ket{3}-\overline{\gamma_2}\Ket{8}\right),\\
    \Ket{\phi_{17|0}}   &\coloneqq&-\sqrt{\frac{3}{36}}\Ket{0}&\oplus\sqrt{\frac{1}{36}}\left( \Ket{2}+\Ket{3}-\overline{\gamma_2}\Ket{8}\right),\\
    \Ket{\phi_{18|0}}   &\coloneqq& \sqrt{\frac{3}{36}}\Ket{1}&\oplus\sqrt{\frac{1}{36}}\left(-\Ket{2}+\Ket{3}-\overline{\gamma_2}\Ket{8}\right),\\
    \Ket{\phi_{19|0}}   &\coloneqq&-\sqrt{\frac{3}{36}}\Ket{1}&\oplus\sqrt{\frac{1}{36}}\left(-\Ket{2}+\Ket{3}-\overline{\gamma_2}\Ket{8}\right),\\
    \Ket{\phi_{20|0}}   &\coloneqq& \sqrt{\frac{3}{36}}\Ket{0}&\oplus\sqrt{\frac{1}{36}}\left( \Ket{2}-\Ket{3}-\overline{\gamma_2}\Ket{8}\right),\\
    \Ket{\phi_{21|0}}   &\coloneqq&-\sqrt{\frac{3}{36}}\Ket{0}&\oplus\sqrt{\frac{1}{36}}\left( \Ket{2}-\Ket{3}-\overline{\gamma_2}\Ket{8}\right),\\
    \Ket{\phi_{22|0}}   &\coloneqq& \sqrt{\frac{3}{36}}\Ket{1}&\oplus\sqrt{\frac{1}{36}}\left(-\Ket{2}-\Ket{3}-\overline{\gamma_2}\Ket{8}\right),\\
    \Ket{\phi_{23|0}}   &\coloneqq&-\sqrt{\frac{3}{36}}\Ket{1}&\oplus\sqrt{\frac{1}{36}}\left(-\Ket{2}-\Ket{3}-\overline{\gamma_2}\Ket{8}\right),\\
    \Ket{\phi_{24|0}}&\coloneqq&\boldsymbol{0}&\oplus\sqrt{\frac{28}{36}}\Ket{\omega_{0}^{\left(0,4,8\right)}},\\
    \Ket{\phi_{25|0}}&\coloneqq&\boldsymbol{0}&\oplus\sqrt{\frac{28}{36}}\Ket{\omega_{1}^{\left(0,4,8\right)}},\\
    \Ket{\phi_{26|0}}&\coloneqq&\boldsymbol{0}&\oplus\sqrt{\frac{28}{36}}\Ket{\omega_{2}^{\left(0,4,8\right)}},\\
    \Ket{\phi_{27|0}}&\coloneqq&\boldsymbol{0}&\oplus\sqrt{\frac{28}{36}}\Ket{\omega_{0}^{\left(1,5,6\right)}},\\
    \Ket{\phi_{28|0}}&\coloneqq&\boldsymbol{0}&\oplus\sqrt{\frac{28}{36}}\Ket{\omega_{1}^{\left(1,5,6\right)}},\\
    \Ket{\phi_{29|0}}&\coloneqq&\boldsymbol{0}&\oplus\sqrt{\frac{28}{36}}\Ket{\omega_{2}^{\left(1,5,6\right)}},\\
    \Ket{\phi_{30|0}}&\coloneqq&\boldsymbol{0}&\oplus\sqrt{\frac{28}{36}}\Ket{\omega_{0}^{\left(2,3,7\right)}},\\
    \Ket{\phi_{31|0}}&\coloneqq&\boldsymbol{0}&\oplus\sqrt{\frac{28}{36}}\Ket{\omega_{1}^{\left(2,3,7\right)}},\\
    \Ket{\phi_{32|0}}&\coloneqq&\boldsymbol{0}&\oplus\sqrt{\frac{28}{36}}\Ket{\omega_{2}^{\left(2,3,7\right)}},
  \end{alignat}
  where $\boldsymbol{0}$ is the zero vector on $\mathbb{C}^2$.
  This measurement satisfies the completeness condition
  \begin{equation}
    \sum_{k=0}^{32}M_{k|0}^\dag M_{k|0}=\openone.
  \end{equation}

  Similarly,
  the other measurements for $A$ conditioned by $B$'s measurement outcomes $j=1$ and $j=2$, that is,
  ${\left\{M_{k|1}^A\right\}}_{k=0,\ldots,32}$ and ${\left\{M_{k|2}^A\right\}}_{k=0,\ldots,32}$, respectively,
  are defined for each $k\in\left\{0,\ldots,32\right\}$ as
  \begin{align}
    M_{k|1}&\coloneqq M_{k|0}\left(\boldsymbol{0}\oplus{\left(X_9\right)}^3\right),\\
    M_{k|2}&\coloneqq M_{k|0}\left(\boldsymbol{0}\oplus{\left(X_9\right)}^6\right).
  \end{align}
  These measurements satisfy the completeness condition
  \begin{equation}
    \sum_{k=0}^{32}M_{k|j}^\dag M_{k|j}=\openone,
  \end{equation}
  for each $j\in\left\{1,2\right\}$.

  In our two-way LOCC protocol for non-catalytic exact state merging of $\Ket{\psi}^{RAB}$ at zero entanglement cost,
  $B$ first performs the measurement ${\left\{M_j^B\right\}}_{j=0,1,2}$, and the measurement outcome $j$ is sent by classical communication from $B$ to $A$.
  Conditioned by $j$, the measurement ${\left\{M_{k|j}^A\right\}}_{k=0,\ldots,32}$ is performed by $A$, and the measurement outcome $k$ is sent by classical communication from $A$ to $B$.
  After this LOCC measurement ${\left\{M_{k|j}^A\otimes M_j^B\right\}}_{j,k}$ by $A$ and $B$,
  for any pair of measurement outcomes $j\in\left\{0,1,2\right\}$ and $k\in\left\{0,\ldots,32\right\}$,
  the post-measurement state
  \begin{equation}
    \frac{\left(M_{k|j}^A\otimes M_j^B\right)\Ket{\psi}^{RAB}}{\left\|\left(M_{k|j}^A\otimes M_j^B\right)\Ket{\psi}^{RAB}\right\|},
  \end{equation}
  is a maximally entangled state with Schmidt rank three between $R$ and $B$.
  Therefore, $B$ performs local isometry conditioned by $j$ and $k$ to transform this maximally entangled state into $\Ket{\psi}^{RB^\prime B}$.
  This protocol yields the conclusion.
\end{proof}

\end{document}